\documentclass[conference]{IEEEtran}

\usepackage{amssymb, amsmath}
\usepackage{amsthm}
\usepackage{graphicx}  

\newcommand{\nop}[1]{}

\newtheorem{definition}{Definition}

\newtheorem{theorem}{Theorem}
\newtheorem{corollary}{Corollary}
\newtheorem{lemma}{Lemma}
\newtheorem{proposition}{Proposition}

\begin{document}

\title{Stochastic Service Curve and Delay Bound Analysis: A Single Node Case}         
\author{Yuming Jiang \\Norwegian University of Science and Technology (NTNU),~Norway}        
%\institute{Norwegian University of Science and Technology (NTNU),~Norway \\
%\email{ymjiang@ieee.org} }
\date{\today}          % Enter your date or \today between curly braces
\maketitle

\begin{abstract}
A packet-switched network node with constant capacity (in bps) is considered, where packets within each flow are served in the first in first out (FIFO) manner. While this single node system is perhaps the simplest computer communication system, its stochastic service curve characterization and independent case analysis in the context of stochastic network calculus (snetcal) are still basic and many crucial questions surprisingly remain open. Specifically, when the input is a single flow, what stochastic service curve and delay bound does the node provide? When the considered flow shares the node with another flow, what stochastic service curve and delay bound does the node provide to the considered flow, and if the two flows are independent, can this independence be made use of and how? The aim of this paper is to provide answers to these fundamental questions. 
\end{abstract}

\section{Introduction} \label{sec-1}

Network calculus is a theory dealing with queueing type problems encountered in packet-switched computer networks. To simplify the analysis, an important idea in network calculus is to characterize the traffic and service processes using some bounds and perform analysis based on such bounds. Network calculus has developed along two tracks --- deterministic and stochastic. Deterministic network calculus, coined by \cite{Cruz91ab}, has been extensively studied since its introduction in early 1990s, and is nicely covered by two books \cite{Chang00} \cite{NetCal}. Stochastic network calculus is the probabilistic extension or generalization of deterministic network calculus. The development of {\em stochastic network calculus} (SNC) began also in early 1990s. Early representative works include \cite{Kurose92}\cite{YS93}\cite{Chang94} for traffic modeling, and \cite{Lee95} for server modeling. The book \cite{Chang00} also covers the theory of effective bandwidth, a first approach to SNC. However, due to challenges specific to stochastic networks, it is recently that crucial network calculus properties have been proved for SNC, e.g. \cite{Burchard06}\cite{Ciucu06}\cite{Li07}\cite{Sigcomm06}\cite{Liu07}\cite{Jiang-comnet09}\cite{Fidler06b}. A selection of recent results can be found in the book \cite{SNetCal}. In addition, three surveys/overviews are available \cite{Mao06}\cite{Fidler10}\cite{Jiang12}. 

In SNC, stochastic service curve is the fundamental concept for server modeling. If some flows and servers are independent, it is expected that tighter analytical bounds can be obtained by making use of this independence information in the analysis. In this paper, we consider a work-conserving constant capacity node serving flows. Each flow consists of a sequence of packets that are served in the first in first out (FIFO) manner.\nop{ The packet sizes may be independent of the packet arrival times.} This single node system is perhaps the simplest computer communication system. For such a system, an immediate impression is perhaps that it has been thoroughly investigated and is well understood, given the current snetcal literature. Unfortunately, this impression has no solid supporting ground and can hence be highly misleading. Indeed, while the snetcal literature has a lot of results based on its various stochastic arrival curve and stochastic service curve models, the following questions remain largely open\nop{, particularly for cases where independence seems to be obvious or natural}. What stochastic service curve and delay bound does the node provide when the input is a single flow? What stochastic service curve and delay bound does the node provide when the considered flow shares the node with another flow? \nop{If packet arrival times and packet sizes are independent,} If the two flows are independent, can this independence be made use of and how? 

The objective of this paper is to derive results providing answers to these fundamental questions. Specifically, (i) when there is only one flow\nop{ and packet sizes are a stationary process}, we prove a stochastic service curve (SSC) that has a bounding function equal to the complimentary cumulative distribution function (CCDF) of the packet length distribution. In addition to the delay bound directly obtained from the existing snetcal results and this SSC, an improved delay bound is derived, which is consistent with a result in the deterministic network calculus literature: {\em For delay bound analysis, the last packetizer may be ignored}\cite{Chang00}\cite{NetCal}. (ii) When there is cross traffic, i.e., the node is shared by the traversing flow with a crossing flow, we prove that the node provides to the aggregate of the two flows an {\em aggregate behavior} stochastic service curve that also has a bounding function equal to the CCDF of the packet length distribution of the aggregate. Based on this and existing snetcal results, an SSC for the traversing flow is found. To overcome the potential difficulty in finding the packet length CCDF of the aggregate flow, a new and improved SSC is derived, where the flow independence can also be made use of. Moreover, in addition to delay bounds from these SSCs, an improved delay bound is obtained, where\nop{ the packetizer effect is not explicit and} the flow independence information can be exploited\nop{ to improve the obtained bound}. (iii) To illustrate the obtained delay bounds, two examples are provided. For the single flow case, the (best) bound matches with the exact result for $M/M/1$/FIFO. For the case with cross traffic, the obtained (best) bound is close to the exact result for\nop{ priority queue} $M/M/1$/priority. 

The rest is structured as follows. In the next section, the system model and notation are defined. In Sec.~\ref{sec-3}, stochastic network calculus basics are given. In Sec.~\ref{sec-4}, the difficulties for stochastic service curve and delay bound analysis, when packetization effect is not ignored, are discussed. In Sec.~\ref{sec-5}, we focus on the single flow case\nop{ and find stochastic service curve and delay bounds for the flow}. In Sec.~\ref{sec-6}, we take cross traffic into consideration, and find stochastic service curves and delay bounds for the traversing flow. In Sec.~\ref{sec-7}, we give examples\nop{ to demonstrate the use of the derived results}. In Sec.~\ref{sec-8}, discussion on related work is provided. Finally, concluding remarks are given in Sec.~\ref{sec-9}. 

\section{The System Model and Notation} \label{sec-2}

We consider a work-conserving network node serving flows in a packet-switched network. It is a discrete-time system with time indexed by $t=0, 1, 2, \dots$. \nop{The length of the time unit is supposed to be small enough to even count one bit service time, e.g. $\frac{1}{C}$} The serving capacity of the node is constant, denoted by $C$ (in bps). Flow $f$ traverses this node and is referred as the {\em traversing flow}. In addition, the node may also serve another flow $f^{c}$, which is the aggregate flow of crossing traffic and is referred as the {\em crossing flow}. Packets within each flow are served in the FIFO manner. Between flows, some scheduling policy is employed, but within this paper, it is not specified. 

By convention, a packet is said to have arrived to (respectively served by) the node when and only when its last bit has arrived to (respectively left) the node. When a packet arrives seeing the node busy, the packet will be queued and the buffer size for such a queue is assumed to be large enough ensuring no packet loss. All queues are initially empty. 

For the traversing flow $f$, we let $p^{f,i}$ denote the $i$th packet $(i = 1, 2, \dots)$ of the flow. For each $p^{f,i}$, we denote by $a^{f,i}$ its arrival time to the node, $d^{f,i}$ its departure time from the node, and $l^{f,i}$ its length (in bits). %If not otherwise specified, we assume the packet length process, $l^{f,i}$, $i=1, 2, \dots$, is stationary with CCDF $\bar{F}_l$. 
Similarly, for the crossing flow $f^{c}$, we let $p^{c,i}$ denote its $i$th packet $(i = 1, 2, \dots)$, $a^{c,i}$ its arrival time, $d^{c,i}$ its departure time, and $l^{c,i}$ its length. % (in bits). %For the crossing flow, we also assume its packet length process $l^{c,i}$ $i=1, 2, \dots$, is stationary and has CCDF $\bar{F}_c$.  

We further use $A^{f}(t)$ and $A^{c}(t)$ to denote the amount of traffic (in bits) that has arrived from the traversing flow and the crossing flow to the node within time period $[0, t]$ respectively. Correspondingly, $A^{f}(s,t) = A^{f}(t)-A^{f}(s)$ and $A^{c}(s,t)=A^{c}(t)-A^{c}(s)$ respectively denote the amount of traffic (in bits) that has arrived from them within time period $(s, t]$. For the departures from the node, we use $A^{*f}(t)$ and $A^{*c}(t)$ to respectively denote the amount of traffic (in bits) that has been served from the traversing flow and the crossing flow within time period $[0, t]$. Correspondingly, $A^{*f}(s,t) = A^{*f}(t)-A^{*f}(s)$ and $A^{*c}(s,t)=A^{*c}(t)-A^{*c}(s)$ respectively represent the amount of traffic (in bits) that has been served from the traversing flow and the crossing flow by the node within time period $(s, t]$. 

For the node, if it is shared by $f$ and $f^{c}$, consider {\em the sequence of packets on the output link}. For this sequence of packets, we call it the {\em aggregate flow} at the node, and let $p^{g,j}$ denote the $j$th packet $(j = 1, 2, \dots)$ of the aggregate flow. For each $p^{g,j}$, denote by $a^{g,j}$ the arrival time of the corresponding packet to the node, $d^{g,j}$ its departure time from the node, and $l^{g,j}$ its length (in bits). Note that the aggregate flow is resulted from the aggregation of the traversing flow and the crossing flow through the work-conserving constant capacity node. In addition, for the departures from the node, we use $A^{*g}(t)$ to  denote the amount of traffic (in bits) from the aggregate flow, which has been served by the node within time period $[0, t]$, and $A^{*g}(s,t)=A^{*g}(t)-A^{*g}(s)$ the amount of traffic (in bits) that has been served from the aggregate flow by the node within time period $(s, t]$. It is worth highlighting that for the departures, there holds $A^{*g}(s,t) = A^{*f}(s,t) + A^{*c}(s,t)$. 

The delay of $p^{f,i}$, denoted by $D^{f,i}$, is naturally 
\begin{eqnarray}
D^{f,i} &=& d^{f,i} - a^{f,i} .
\end{eqnarray}

In addition, we define the (virtual) delay at time $t$ as 
\begin{eqnarray}
D^{f}(t) &=&  \inf \{\tau: A^{*f}(t+\tau) \ge A^{f}(t) \} .
\end{eqnarray}
Due to FIFO, the delay of $p^{f,i}$ is also found from\footnote{Strictly speaking, $D^{f,i} \le D^{f}(a^{f,i})$\nop{ should be used}, where the equation holds only if there is no concurrent arrival at $a^{f,i}$. If $A(t)$ and $A^{*} (t) $ are defined on $[0, t)$, this virtual delay definition defines the virtual waiting time for a (possible virtual) arrival at time $t$.}
\begin{eqnarray}
D^{f,i} = D^{f}(a^{f,i})=\inf \{\tau: A^{*f}(a^{f,i}+\tau) \ge A^{f}(a^{f,i}) \} . &&
\end{eqnarray}

The min-plus convolution, denoted by $\otimes$, of functions $f(\cdot)$ and $g(\cdot)$, is defined as:
\begin{equation}\label{mip1}
f \otimes g (y) = \inf_{0 \le x \le y}\{f(x) + g(y-x)\}
\end{equation}
and it is easily verified $f \otimes g (y) = g \otimes f (y)$.

The maximum horizontal distance between functions $\alpha(\cdot)$ and $\beta(\cdot)$, denoted by $h(\alpha, \beta)$, is defined as
\begin{equation}\label{hd}
h(\alpha, \beta) = \sup_{s \ge 0}\{ \inf\{ \tau \ge 0: \alpha(s) \le \beta(s+\tau)\} \}.
\end{equation}

\section{Stochastic Network Calculus Basics} \label{sec-3}
In this section, some related stochastic network calculus models and existing results are introduced. 

\subsection{Models} %Stochastic Arrival Curve and Stochastic Service Curve 
In snetcal, stochastic arrival curve (SAC) and stochastic service curve (SSC) are the most fundamental models. While SAC is for traffic modeling, SSC is for server modeling. In the literature, there are several definition variations of SAC and SSC. In this paper, we adopt the following, to which the other variations may be mapped \cite{SNetCal}.

\begin{definition}\label{def-sac}
A flow is said to have a v.b.c (virtual backlog centric) stochastic arrival curve $\alpha(t)$ with bounding function $\bar{F}$, if its arrival process $A(t)$ satisfies, for any $t \ge 0$ \cite{Cruz96}\cite{LCN02}\cite{Jiang-comnet09},
\begin{equation} \label{sac-2}
P \{A(t) - A\otimes \alpha(t) > x \} \le \bar{F}(x)
\end{equation}
where $\alpha(t)$ is non-negative non-decreasing on $t$, and $\bar{F}(x)$ non-negative non-increasing on $x$.
\end{definition}

In Definition \ref{def-sac}, if $\bar{F}(0) = 0$, implying $\bar{F}(x) = 0$ for all $x \ge 0$ or in other words $A(t) \le A\otimes \alpha(t)$, $\alpha(t)$ is also called a (deterministic) arrival curve of the flow in the network calculus literature.

\begin{definition}\label{def-ssc}
A system is said to provide a stochastic service curve $\beta(t)$ with bounding function $\bar{G}$, if there holds, for all $t \ge 0$ \cite{Cruz96} \cite{SNetCal}, 
\begin{equation}
P\{A \otimes \beta(t) - A^{*}(t) >x \} \le \bar{G}(x)
\end{equation}
where $\beta(t)$ is non-negative non-decreasing on $t$, and $\bar{G}(x)$ non-negative non-increasing on $x$.
%where $\beta^{x}(t) \equiv [\beta(t)-x]^{+}$ and $\beta(t)$ is non-negative non-decreasing on $t$, and $\bar{G}(x)$ non-negative non-increasing on $x$.
\end{definition}

In Definition \ref{def-ssc}, if $\bar{G}(0) = 0$, implying $\bar{G}(x) = 0$ for all $x \ge 0$ or in other words $A^{*}(t) \ge A \otimes \beta (t)$, $\beta(t)$ is also called a (deterministic) service curve of the system in the network calculus literature.

\subsection{Related Results}
This paper focuses on stochastic service curve and delay bound analysis. The following presents some related results.

For stochastic service curve analysis, due to the difficulties that will be discussed in the next section, very little is known for the general case where packet length distribution is taken into consideration, and available results mostly assume fluid system ignoring packetization effect or that all packets have the same length. 

In the SNC literature, the following result, called the {\em leftover service} property, has been widely used for finding the stochastic service curve charaterization of the service provided to a flow (e.g. see \cite{Ciucu06}\cite{Liu07}). %Specifically, if the system is shared by two flows, the traversing flow and the crossing flow, the following result is fundamental in snetcal (e.g. see \cite{Liu07} \cite{SNetCal}).

\begin{proposition}\label{lm-los}
Consider a system with cross traffic. If the system provides a stochastic service curve $\beta(t)$ with bounding function $\bar{G}(x)$ and the crossing flow has a v.b.c. stochastic arrival curve $\alpha^c(t)$ with bounding function $\bar{F}^c(x)$, then the {\em leftover service} provided to the traversing flow has a stochastic service curve $\beta^f(t)=(\beta(t)-\alpha^c(t))^{+}$ with bounding function $\bar{G}^f(x) = \bar{F}^c \otimes \bar{G}(x)$. 
\end{proposition}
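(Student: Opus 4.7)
My plan is the standard two-step argument: reduce the probability bound to a sample-path inequality via a union bound, then verify the inequality using the given SSC and v.b.c.\ SAC together with the definition of min-plus convolution.

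For the reduction, I would fix $x \ge 0$ and any split $x = x_1 + x_2$ with $x_i \ge 0$. Granting the event inclusion
\begin{equation*}
\{A^f \otimes \beta^f(t) - A^{*f}(t) > x\} \subseteq \{A \otimes \beta(t) - A^*(t) > x_2\} \cup \{A^c(t) - A^c \otimes \alpha^c(t) > x_1\},
\end{equation*}
Boole's inequality combined with the SSC and v.b.c.\ SAC hypotheses gives $P\{A^f \otimes \beta^f(t) - A^{*f}(t) > x\} \le \bar{G}(x_2) + \bar{F}^c(x_1)$; taking the infimum over admissible splits then produces $\bar{F}^c \otimes \bar{G}(x)$ by the definition of $\otimes$ in \eqref{mip1}.

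To establish the inclusion, I work with its contrapositive: writing $X := A \otimes \beta(t) - A^*(t)$ and $Y := A^c(t) - A^c \otimes \alpha^c(t)$, I want to show that $X \le x_2$ and $Y \le x_1$ imply $A^f \otimes \beta^f(t) - A^{*f}(t) \le x_1 + x_2$. Starting from $A^{*f}(t) = A^*(t) - A^{*c}(t) \ge A^*(t) - A^c(t)$ (arrivals bound departures), I would substitute the SSC lower bound $A^*(t) \ge A \otimes \beta(t) - x_2$, expand $A \otimes \beta(t) = A(s^*) + \beta(t-s^*)$ at a (near-)minimizer $s^* \in [0,t]$, and apply the SAC pointwise as $A^c(t) \le A^c(s^*) + \alpha^c(t-s^*) + x_1$ (which follows from $A^c \otimes \alpha^c(t) \le A^c(s^*) + \alpha^c(t-s^*)$ combined with $Y \le x_1$). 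Collecting terms,
\begin{equation*}
A^{*f}(t) \ge A^f(s^*) + \beta(t-s^*) - \alpha^c(t-s^*) - x_1 - x_2.
\end{equation*}
When $\beta(t-s^*) \ge \alpha^c(t-s^*)$, the right-hand side equals $A^f(s^*) + \beta^f(t-s^*) - x_1 - x_2 \ge A^f \otimes \beta^f(t) - x_1 - x_2$, which rearranges to the desired inequality via $A^f \otimes \beta^f(t) \le A^f(s^*) + \beta^f(t-s^*)$.

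The main obstacle I anticipate is the complementary case $\beta(t-s^*) < \alpha^c(t-s^*)$ at the minimizer, since then $\beta^f(t-s^*) = 0$ strictly exceeds $\beta(t-s^*) - \alpha^c(t-s^*)$ and the final step of the chain breaks. My plan to handle it is a dichotomy: either $\beta(u) \ge \alpha^c(u)$ for some $u \in [0,t]$, in which case I shift the test point to such $u$ (setting $s = t-u$) and rerun the chain keeping the SSC applied at its own minimizer $s^*$ but reapplying the SAC at the shifted point via $A^c(t) - A^c(s) \le \alpha^c(t-s) + x_1$; or no such $u$ exists, in which case $\beta^f$ vanishes identically on $[0,t]$ so that $A^f \otimes \beta^f(t) = 0$ and the inequality is trivial. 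The subtle point in the first branch is that the SSC only controls $A^*(t)$ through the infimum $A \otimes \beta(t)$, not pointwise at an arbitrary $s$, whereas the SAC \emph{does} support a pointwise bound; keeping track of which bound is applied where, so that the $(\alpha^c - \beta)^+$ slack introduced by the positive-part operation is absorbed rather than left on the right-hand side, is the technical crux of the argument.
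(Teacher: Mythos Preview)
The paper does not supply a proof of this proposition. It appears in Section~\ref{sec-3} under ``Related Results'' as a known leftover-service property cited from the SNC literature, so there is no in-paper argument to compare your attempt against.

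On the merits of your attempt: the reduction via a union bound over the split $x=x_1+x_2$ and the sample-path chain starting from $A^{*f}(t)\ge A^*(t)-A^c(t)$ are the standard route and are correct, and you are right to flag the positive-part case as the crux. However, your proposed dichotomy does not close the gap. In the branch where you ``shift the test point'' to some $u$ with $\beta(u)\ge\alpha^c(u)$ while ``keeping the SSC applied at its own minimizer $s^*$,'' the SSC only delivers $A^*(t)\ge A(s^*)+\beta(t-s^*)-x_2$; bounding $A^c(t)-A^c(s)\le\alpha^c(t-s)+x_1$ at the shifted $s=t-u\neq s^*$ then leaves
\[
A^{*f}(t)\ \ge\ A^f(s^*)+\bigl[A^c(s^*)-A^c(s)\bigr]+\beta(t-s^*)-\alpha^c(t-s)-x_1-x_2,
\]
and neither the cross term $A^c(s^*)-A^c(s)$ nor the mismatched pair $\beta(t-s^*)-\alpha^c(t-s)$ can be collapsed into $A^f(s')+\beta^f(t-s')$ for any single $s'$, so the chain does not reach $A^f\otimes\beta^f(t)$.

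This is a known subtlety: with only the \emph{weak} (min-plus) SSC of Definition~\ref{def-ssc} for the aggregate, the sample-path argument yields the leftover curve $\beta-\alpha^c$ without the positive part, and since $A^f\otimes(\beta-\alpha^c)\le A^f\otimes(\beta-\alpha^c)^+$ the inequality runs the wrong way to upgrade afterward. The $(\cdot)^+$ form is typically justified either under a \emph{strict} (backlog-period) service guarantee for the aggregate, or---as is effectively the case everywhere the paper actually applies Proposition~\ref{lm-los}---by working with linear curves $\beta(t)=Ct$ and $\alpha^c(t)=r^ct$ with $r^c<C$, for which $\beta-\alpha^c$ is already non-negative and the positive part is vacuous.
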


For delay bound analysis, the following result has been proved (e.g. see \cite{Cruz96}\cite{SNetCal}).

\begin{proposition}\label{lm-dbd}
If a system provides a stochastic service curve $\beta(t)$ with bounding function $\bar{G}$ to a flow $f$, which has v.b.c stochastic arrival curve $\alpha^f(t)$ with bounding function $\bar{F}^f$, then the flow has a delay bound as
\begin{equation}\label{sdb}
P \{ D^{f}(t) > h(\alpha^f +x, \beta) \} \le \bar{F}^f\otimes \bar{G}(x).
\end{equation}
%where $h(\alpha +x, \beta)$ denotes the maximum horizontal distance between $\alpha(t)+x$ and $\beta(t)$. 
\end{proposition}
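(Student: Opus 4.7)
The plan is to reduce the tail event on $D^f(t)$ to a tail event on the sum of two random slacks, one supplied by the SAC and one by the SSC, and then invoke the standard convolution bound for such sums. Set $\tau := h(\alpha^f + x, \beta)$. From the virtual-delay definition, $\{D^f(t) > \tau\}$ coincides with $\{A^{*f}(t+\tau) < A^f(t)\}$, so it suffices to bound the latter. I would introduce the slack random variables
\begin{equation*}
V(t) := [A^f(t) - A^f \otimes \alpha^f(t)]^{+}, \qquad W(u) := [A^f \otimes \beta(u) - A^{*f}(u)]^{+},
\end{equation*}
whose tails are controlled by $\bar{F}^f$ and $\bar{G}$ respectively, by Definitions \ref{def-sac} and \ref{def-ssc}. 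On $\{A^{*f}(t+\tau) < A^f(t)\}$, chaining the two inequalities gives
\begin{equation*}
A^f \otimes \beta(t+\tau) = A^{*f}(t+\tau) + W(t+\tau) < A^f(t) + W(t+\tau) \le A^f \otimes \alpha^f(t) + V(t) + W(t+\tau).
\end{equation*}

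The next step is to pair this with the deterministic inequality $A^f \otimes \beta(t+\tau) \ge A^f \otimes \alpha^f(t) + x$, which would then force $V(t) + W(t+\tau) > x$ on the conditioning event. Unfolding $A^f \otimes \beta(t+\tau) = \inf_{0 \le s \le t+\tau}\{A^f(s) + \beta(t+\tau-s)\}$ and applying the horizontal-distance identity with $u = t-s$ to write $\beta(u+\tau) \ge \alpha^f(u) + x$, one gets, for every $s \in [0,t]$,
\begin{equation*}
A^f(s) + \beta(t+\tau-s) \ge A^f(s) + \alpha^f(t-s) + x \ge A^f \otimes \alpha^f(t) + x.
\end{equation*}
For $s \in (t, t+\tau]$ the horizontal-distance bound does not apply directly, and this range must be handled by combining monotonicity of $A^f$, non-negativity of $\beta$, and the conditioning event itself.

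Once the implication $\{D^f(t) > \tau\} \subseteq \{V(t) + W(t+\tau) > x\}$ is in hand, the bound follows from the standard tail estimate for sums of two non-negative random variables: for any $y \in [0, x]$, $\{V + W > x\} \subseteq \{V > y\} \cup \{W > x - y\}$, so $P\{V+W>x\} \le \bar{F}^f(y) + \bar{G}(x-y)$, and taking the infimum over $y$ yields $\bar{F}^f \otimes \bar{G}(x)$. The hardest part is the deterministic inequality on the tail range $s \in (t, t+\tau]$: the horizontal-distance definition only constrains $\beta$ relative to $\alpha^f$ for arguments coming from $[0,t]$, so the bound on the remaining range must come from structural properties of $A^f$ and from the conditioning event rather than from the horizontal distance, and care is needed there to avoid quietly smuggling in hidden assumptions such as $\alpha^f(0)=0$ or super-additivity of $\alpha^f$.
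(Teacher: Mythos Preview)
The paper does not prove Proposition~\ref{lm-dbd}; it is quoted as an established result with citations to \cite{Cruz96} and \cite{SNetCal}, so there is no in-paper argument to compare your attempt against.

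That said, your outline is precisely the standard argument found in those references: translate $\{D^f(t)>\tau\}$ into $\{A^{*f}(t+\tau)<A^f(t)\}$, insert the SAC slack $V(t)$ and the SSC slack $W(t+\tau)$, lower-bound $A^f\otimes\beta(t+\tau)$ over $s\in[0,t]$ via the horizontal-distance inequality $\beta(u+\tau)\ge\alpha^f(u)+x$, and close with the two-term union bound that yields $\bar F^f\otimes\bar G$. Your caution about the residual range $s\in(t,t+\tau]$ is exactly the delicate step in the textbook proof; the references dispose of it with the ingredients you name---monotonicity of $A^f$ together with $\beta\ge 0$ gives $A^f(s)+\beta(t+\tau-s)\ge A^f(t)$ for such $s$, and on the event this already exceeds $A^{*f}(t+\tau)$, so the infimum defining $A^f\otimes\beta(t+\tau)$ is effectively governed by the $[0,t]$ part once the SSC slack is in play. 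Your remark that one must avoid tacitly assuming $\alpha^f(0)=0$ or super-additivity is apt; the cited proofs do not need either.
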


When the system is shared by the traversing flow and the crossing flow, the following delay bound follows immediately from Proposition \ref{lm-los} and Proposition \ref{lm-dbd} \cite{SNetCal}.

\begin{proposition}\label{lm-los-db}
Under the same condition as Preposition \ref{lm-los}, if the traversing flow has a stochastic arrival curve $\alpha^f(t)$ with bounding function $\bar{F}^f(x)$, then the delay of the flow is bounded as
\begin{equation}\label{sdb-sim}
P \{ D^{f}(t) > h(\alpha^f +x, \beta^{f}) \}  \le \bar{F}^f \otimes \bar{F}^c\otimes \bar{G}(x).
\end{equation}
where $\beta^f(t) = (\beta(t) - \alpha^c(t))^{+}$.
\end{proposition}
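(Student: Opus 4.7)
The plan is to obtain this delay bound by directly chaining the two previously stated results, Proposition \ref{lm-los} and Proposition \ref{lm-dbd}, since the bound has the form one gets by first replacing the shared system with the leftover service model for the traversing flow and then applying the single-flow delay bound.

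First, I would invoke Proposition \ref{lm-los}. The hypotheses match verbatim: the system provides stochastic service curve $\beta(t)$ with bounding function $\bar{G}(x)$, and the crossing flow has v.b.c.\ stochastic arrival curve $\alpha^c(t)$ with bounding function $\bar{F}^c(x)$. The conclusion is that the traversing flow sees a leftover service whose stochastic service curve is $\beta^f(t) = (\beta(t) - \alpha^c(t))^{+}$ with bounding function $\bar{G}^f(x) = \bar{F}^c \otimes \bar{G}(x)$. This produces exactly the $\beta^f$ appearing in the proposition statement.

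Next, I would apply Proposition \ref{lm-dbd} to the traversing flow, treating the leftover service $(\beta^f, \bar{G}^f)$ as the stochastic service curve of the system as seen by $f$. The flow $f$ is assumed to have v.b.c.\ stochastic arrival curve $\alpha^f(t)$ with bounding function $\bar{F}^f(x)$, which matches the other hypothesis of Proposition \ref{lm-dbd}. The conclusion is
\begin{equation*}
P\{D^{f}(t) > h(\alpha^f + x, \beta^f)\} \le \bar{F}^f \otimes \bar{G}^f(x).
\end{equation*}
Substituting $\bar{G}^f(x) = \bar{F}^c \otimes \bar{G}(x)$ and using associativity of the min-plus convolution $\otimes$ (easy from its definition in (\ref{mip1})) yields $\bar{F}^f \otimes \bar{F}^c \otimes \bar{G}(x)$ on the right-hand side, which is the claimed bound (\ref{sdb-sim}).

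There is really no main obstacle here, since the two ingredients are already established and the composition is mechanical; the only point that deserves a short remark is that Proposition \ref{lm-dbd} applies with the leftover SSC as the service characterization of the node seen by $f$, which is legitimate precisely because Proposition \ref{lm-los} delivers its conclusion in exactly the form required by Definition \ref{def-ssc}. Thus the combination is well-typed, and the rest is just a rewrite using associativity of $\otimes$.
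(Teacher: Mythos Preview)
Your proposal is correct and mirrors exactly what the paper does: it states that the bound ``follows immediately from Proposition~\ref{lm-los} and Proposition~\ref{lm-dbd},'' i.e., first obtain the leftover stochastic service curve $\beta^f=(\beta-\alpha^c)^+$ with bounding function $\bar{F}^c\otimes\bar{G}$, then apply the single-flow delay bound and use associativity of $\otimes$. There is nothing to add.
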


\section{The Difficulties}\label{sec-4}

As highlighted in the previous section, stochastic service curve (SSC) is the most fundamental server model for snetcal. As reviewed there, if the SSC characterization of the service provided by the node to the flow and the v.b.c SAC characterization of the flow are known, a delay bound can be readily obtained from the existing snetcal results. In the literature, while many results (e.g. \cite{Jiang-note10}\cite{Jiang-comnet09}\cite{Kelly96}) may be exploited to find the v.b.c SAC characterization of a flow, there are very few for SSC analysis. 

The difficulties are inherent in the SSC definition. Suppose $S(t)$ is the service process provided by the node to the flow. The following equation, called the min-plus convolution queueing principle \cite{Jiang-valuetools09}, holds \cite{SNetCal}:
\begin{equation}\label{pri}
A^{*}(t) = \inf_{0 \le s \le t}\{A(s) + S(s, t) \}
\end{equation}
where $S(t)$ denotes the service process provided to the flow and $S(s,t) \equiv S(t) - S(s)$.

Essentially, SSC defines a way to characterize the service process $S(t)$. While the SSC definition allows the derivation of results useful for performance study of computer networks, finding the SSC characterization of a system is surprisingly challenging. Even for the simplest constant capacity single node system, the challenge already exists. 

\subsection{A Pitfall}\label{sec-4a}

When the node has constant capacity $C$ (in bps), the following equation has sometimes been {\em wrongly} believed in the network calculus literature (see, e.g., \cite{KumarMK04}), 
\begin{equation}\label{pitfall}
A^{*}(t) = \inf_{0 \le s \le t}\{A(s) + C \cdot (t-s) \}.
\end{equation}
Or in other words, it is {\em wrongly} believed, for the constant capacity node:
\begin{equation}\label{pitfall0}
S(s, t) = C \cdot (t-s). 
\end{equation}

To give a counterexample, let's consider a single packet flow input. The packet arrives at time $a^{f,1} = 1$ and has length $2$. It is then clear that $A(0)=0, A(1)=2, A(2)=2, A(3)=2$. Suppose $C=1$. Then, $d^{f, 1}=3$. Hence, $A^*(0)=0, A^*(1)=0, A^*(2)=0, A^*(3)=2$. However, from (\ref{pitfall}), the output would be $A^*(0)=0, A^*(1)=1, A^*(2)=2, A^*(3)=2$, which is wrong\footnote{Choosing to define $A(t)$ and $A^*(t)$ on $[0, t)$ does not correct the mistake.}. Table \ref{tb1} summarizes the comparison. %\footnote{In this paper, we have defined $A(t)$ and $A*(t)$ on $[0, t]$. Choosing to define them on $[0, t)$, the output from (\ref{pitfall}) would become $A^*(0)=0, A^*(1)=0, A^*(2)=1, A^*(3)=2, A^*(4)=2$, which is still wrong, since in this case, the actual output is $A^*(0)=0, A^*(1)=0, A^*(2)=0, A^*(3)=0, A^*(4)=2$.}.

\begin{table}[htb]
  \centering
\caption{A counterexample}
\label{tb1}
%\begin{center}
\begin{tabular}{|l|c|c|c|c|c|}
\hline
$t$ & 0 & 1 & 2 & 3 & 4\\ \hline
\hline
$A(t)$& 0 & 2 & 2 & 2 & 2\\ \hline
$A^*(t)$ actual & 0 & 0 & 0 & 2 & 2\\ \hline
$A^*(t)$ from (\ref{pitfall}) & 0 & 1 & 2 & 2 &2 \\ \hline
%\hline
%$A(t)$& 0 & 0 & 2 & 2 & 2\\ \hline
%$A^*(t)$ actual & 0 & 0 & 0 & 0 & 2\\ \hline
%$A^*(t)$ from (\ref{pitfall}) with $A(t)$ defined on $[0,t)$ & 0 & 0 & 1 & 2 &2 \\ \hline
\end{tabular}%
%\end{center}
\end{table}

\subsection{Difficulty in Finding SSC}

When fluid-flow is assumed, i.e. $l^{f, i} \to 0$ and  $l^{c, j} \to 0$  for all packets, it is easy to verify that the amount of service provided by the node during any backlog period with length $\tau$ is $C \cdot \tau$. Then, from the network calculus literature, it is known that the node provides a deterministic service curve $\beta(t) = C \cdot t$\nop{\footnote{Indeed, in this case, it can be proved that $A^{*}(t) = \inf_{0 \le s \le t}\{A(s) + C\cdot(t-s) \}$.}}. In addition, when cross traffic is present, the stochastic service curve provided to the traversing flow is readily derived from the network calculus {\em leftover service} property as shown by Proposition \ref{lm-los}. In addition, from the traversing flow's viewpoint, the crossing flow can be treated as a process that impairs the total service provided by the server. Then with the impairment process concept \cite{Sigcomm06}\cite{SNetCal}, the stochastic service curve characterization of the service available to the traversing flow can also be found.

However, when packetization effect is taken into account, finding stochastic service curves for the node becomes {\em surprisingly} challenging, even though it has constant capacity.

Indeed, the network calculus literature has shown that a constant server with capacity $C$ has a deterministic service curve $(C\cdot t - L^{max})^{+}$, where $L^{max}$ denotes the maximum packet length in the system. This follows from two fundamental results. (i) If there is a function that lower-bounds the amount of service provided to the input during any backlog period, then the function is a (deterministic) service curve of the server \cite{NetCal}. (ii) Within any backlog period of length $t$, the amount of service provided by a constant rate server with capacity $C$  is lower bounded by $(C\cdot t - L^{max})^{+}$ \cite{Jiang01}. 

Fundamentally, the following inequality can be proved\nop{\footnote{We remark that, while under certain assumptions, such as fluid-flow, it can be proved that $A^{*}(t) = \inf_{0 \le s \le t}\{A(s) + C\cdot(t-s) \}$, this quality generally does not hold and counter examples can be constructed.}} \cite{Jiang01}: 
\begin{equation}
A^{*}(t) \ge A(t)\otimes (C\cdot t - L^{max}(t))^{+}.
\end{equation}
where $L^{max}(t) \equiv \max\{l^1, l^2, l^{(t)} \}$ with $l^{(t)}$ denoting the length of the most recent packet that arrived before or at $t$, and 
 $L^{max}=\lim_{t \to \infty} L^{max}(t)$.

With simple manipulation based on the definition of $\otimes$, we obtain
\begin{equation}
A(t)\otimes (C\cdot t) - A^{*}(t)  \le L^{max}(t)
\end{equation}
which implies 
\begin{eqnarray}\label{eq-lmax}
P\{A(t)\otimes (C\cdot t) - A^{*}(t) > x\} &\le& P\{ L^{max}(t) >x\} \nonumber \\
&\equiv& \bar{F}^{L^{max}(t)}(x). 
\end{eqnarray}

Then, we can conclude that {\em the constant capacity node provides a stochastic service curve $C \cdot t$ with bounding function $\bar{F}^{L^{max}(t)}(x)$}.

Unfortunately, $L^{max}(t)$ is non-decreasing with $t$, implying that $\bar{F}^{L^{max}(t)}(x)$ may approach $1$ as $t$ grows \cite{CiucuS12}\footnote{An exception is when all packets have the same length $L$ or their lengths are upper-bounded by $L$. In this case, for any $t$, $\bar{F}^{L^{max}(t)}(x) \le 1$ for all $x \le L$; otherwise $\bar{F}^{L^{max}(t)}(x) = 0$ for all $x > L$. Under this case, the node provides a deterministic service curve $(C\cdot t - L)^{+}$, with which, further analysis similar to that under the fluid-flow case can be conducted.\nop{ However, in general, for a process like $\{l^1, l^2, \dots \}$, where the packet length is distributed between 0 and $\infty$, it can been shown that as $t \to \infty$, $\lim_{t \to \infty} P\{ L^{max}(t) >x\}$ can only take value one \cite{CiucuS12}.}}. Consequently, using  $\bar{F}^{L^{max}(t)}$ as a bounding function is meaningless.

The problem becomes even more challenging when there is cross traffic. First, in order to apply the leftover service property to obtain a stochastic service curve for the traversing flow, we need to know the stochastic service curve of the node. However, the above discussion implies that the stochastic service curve of the node is yet to be found. Second\nop{ if the compromised service curve would be used}, the packet length process is a mixture of the packet length process of the traversing flow and that of the crossing flow. This makes the determination of $\bar{F}^{L^{max}(t)}$ and consequently the stochastic service curve characterization of the node even more difficult. 

%{\bf Remark:} 
To tackle the time-growing $\bar{F}^{L^{max}(t)}$ problem, one may introduce a compromised service curve $(C - \theta) \cdot t$, for some $\theta \ge 0$, with a resultant bounding function related to $\int_{x}^{\infty}\bar{F^{l}}(y)dy$, by exploiting an approach used in SNC in dealing with maximal random processes \cite{SNetCal}. Recently, the effort in \cite{LiebeherrBC12} shows that, without compromising the service curve expression $C \cdot t$, a bounding function, which is also related to $$\int_{x}^{\infty}\bar{F}^{l}(y)dy$$ can be found, when the packet length process is stationary and satisfies some conditions.% with stationary CCDF $\bar{F}^{l}$. %$\frac{1}{\theta}\int_{x}^{\infty}\bar{F^{l}}(y)dy$ Burchard11

Note that intuitively, a packetized system can be treated as the concatenation of a fluid system followed by a packetizer \cite{Chang00}\cite{NetCal} \footnote{A packetizer is an element gathering all bits in a packet, which delivers the entire packet with no delay until and immediately after receiving the last bit of the packet.}. Since the fluid system provides deterministic service curve $C \cdot t$ as discussed above, the stochastic behavior of the node is hence determined by the packet length distribution. Based on this, we boldly conjecture that {\em the constant capacity node provides a stochastic service curve $C \cdot t$ with bounding function simply as $\bar{F}^{l}$}. However, while the intuition is perhaps straightforward, proving the validity of the conjecture is far from direct as to be shown in the next section.

\subsection{Difficulty in Making Use of Independence Information}
Besides the difficulty in finding the SSC characterization of the node, it is even more difficult to make use of potential independence information in the analysis. This is due to that, the service process $S(t)$ and the arrival process $A(t)$ are inherently dependent, implied by (\ref{pri}). More specifically, both $S(t)$ and $A(t)$ are functions of the lengths of packets that are counted in. For a simple example, suppose flow $f$ only has one packet $p^{f,1}$ whose length $l^{f, 1}$ is a random variable. It is clear that $A(t) = l^{f, 1}$ and also $S(t) = l^{f, 1}$ for any $t \ge d^{f, 1}$, which indicates strong dependence between $A(t)$ and $S(t)$. 

The inherent dependence between $A(t)$ and $S(t)$ makes it difficult to make use of potential independence in the analysis. Particularly, when there is cross traffic present, even though the traversing flow may be independent of the crossing flow, this independence information cannot be exploited when applying the existing snetcal results as reviewed in Section \ref{sec-3}. This is due to that, the stochastic service curve characterization of the service process $S(t)$ provided by the node is dependent on the packet lengths of both flows. A consequence is that, with the SSC decided from Proposition \ref{lm-los}, it is not possible to make use of the dependence information to improve the (independence-information-unaware) delay bound in Proposition \ref{lm-los-db}.%, which is general and holds no matter whether the two flows are independent or not.

To this point, we would like to remark that, if all packets have the same length\footnote{The fluid-flow is a special case with infinitely small packet length.} or their lengths are upper-bounded, the service process of the node has a deterministic service curve\nop{ as discussed in the previous subsection}. For this case, independent case analysis may be conducted by following the approaches proposed in \cite{Sigcomm06} and \cite{Fidler06b}. %and summarized in \cite{SNetCal}. 

However, for the more general case, even though it is intuitive that the independence information of the two flows should allow improving the analysis, how specifically to make use of this independence information in the analysis remains to be addressed.

\section{Stochastic Service Curve and Delay Bounds: The Single Flow Case} \label{sec-5}

In this section, we first prove the stochastic service curve as suggested by the conjecture. Then, delay bounds are derived for the traversing flow. To deal with the difficulty in finding the SSC, a novel approach is introduced. 

\subsection{Stochastic Service Curve}

We now present the approach to tackling the difficulty. In this approach, we relate the service provided by a (not-necessarily constant rate) system to a flow $f$ to \nop{a virtual constant rate system whose capacity is $R$ (in bps) and its behavior is described by }a virtual time function\nop{. The virtual time function is} defined as 
\begin{eqnarray}
V^{f,i}(R) &=& \max \{a^{f,i}, V^{f,i-1} \} + \frac{l^{f,i}}{R} \label{eq-vtf}
\end{eqnarray}
iteratively for $i=1, 2 \dots$, with $V^{f,0}=0$, where $R$ is a constant rate parameter. 

Applying iteratively to its right hand side, (\ref{eq-vtf}) becomes %is equivalent to:
\begin{eqnarray}
V^{f,i}(R) &=& \max_{1 \le j \le i} \{a^{f,j}+ \frac{\sum_{k=j}^{i}l^{f,k}}{R} \}.  \label{eq-vtfa}
\end{eqnarray}

The following result is crucial, which establishes a link between the stochastic service curve model and the virtual time function. For deterministic network calculus, a similar relationship can be found in \cite{Jiang03} (Lemma 2). The proof is long and is included in the appendix. 

%==============single node, stochastic service curve===============
\begin{lemma}\label{lm-relation}
Consider a flow $f$ served by a system. For any time $t >0$ and $R >0$, the following relationship holds (for any sample path of the system):
%Consider a flow $f$ served by a system whose service provided to the flow is characterized by the virtual function $V^{f,i}(R)$. For any time $t >0$, the following relation holds for any sample path of the system:
\begin{equation}\label{4ssd-1}
A^{f}\otimes\beta(t) - A^{f*}(t) \le R \cdot [d^{f,i(t)} - V^{f,i(t)}(R)] + l^{f, i(t)}
%--Note here i may be different from sample path to sample path. The derivation considers a particular path where no prob.--
%P\{ A^{f}\otimes\beta(t) - A^{f*}(t) >x\} \le P\{ R \cdot [d^{f,i} - V^{f,i}(R)] + l^{f,i}>x\}.
\end{equation}
where $\beta(t) = R \cdot t$, $i(t) = \min \{k: d^{f, k} \ge t \}$\footnote{Intuitively, if at time $t$, there is a packet under service from the flow, then $p^{f,i(t)}$ is this packet; otherwise, $p^{f,i(t)}$ is the first packet from this flow, which receives service after $t$.}, and $l^{f, i(t)}$ the length of packet $p^{f,i(t)}$. % $i(t)= \min \{k: d^{f,k} \ge t\}$,  
%where $\beta(t) = R \cdot t$, and $p^{f,i(t)}$ denotes the most recent packet in flow $f$ which departs the system not before $t$, or in other words $i(t) = \min \{k: d^{f, k} \ge t \}$, and $l^{f, i(t)}$ its length. % $i(t)= \min \{k: d^{f,k} \ge t\}$,  
\end{lemma}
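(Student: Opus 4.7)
The plan is to bound $A^f \otimes \beta(t) = \inf_{0 \le s \le t}\{A^f(s)+R(t-s)\}$ by testing the infimum at a carefully chosen point $s$ that reveals the structure of $V^{f,i(t)}(R)$, and then to compare with the FIFO-based lower bound on $A^{f*}(t)$. Fix $t>0$ and set $i:=i(t)$, so by the definition of $i(t)$ we have $d^{f,i-1}<t\le d^{f,i}$ (with the convention $d^{f,0}=0$). From the closed form (\ref{eq-vtfa}), let $j^*\in\{1,\dots,i\}$ attain the maximum, which gives the key identity
\begin{equation*}
R\cdot V^{f,i}(R) \;=\; R\cdot a^{f,j^*}+\sum_{k=j^*}^{i}l^{f,k}.
\end{equation*}

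The core step is to pick $s$ to coincide with the instant immediately preceding the arrival of $p^{f,j^*}$, so that $A^f(s)=\sum_{k=1}^{j^*-1}l^{f,k}$ (packet $j^*$ has not yet been counted). Feeding this $s$ into the convolution yields $A^f\otimes\beta(t)\le \sum_{k=1}^{j^*-1}l^{f,k}+R(t-a^{f,j^*})$. On the other hand, since $d^{f,i-1}<t$ and the server is FIFO, all packets with indices $1,\dots,i-1$ have fully departed by $t$, giving $A^{f*}(t)\ge \sum_{k=1}^{i-1}l^{f,k}$. Subtracting these two inequalities and eliminating $R\cdot a^{f,j^*}$ via the identity above collapses the algebra to
\begin{equation*}
A^f\otimes\beta(t)-A^{f*}(t) \;\le\; R\bigl(t-V^{f,i}(R)\bigr)+l^{f,i},
\end{equation*}
after which the monotonicity $t\le d^{f,i}$ upgrades the right-hand side to the target expression (\ref{4ssd-1}).

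The main technical obstacle I anticipate is verifying that the chosen test point $s$ actually lies in $[0,t]$, i.e., that $a^{f,j^*}\le t$. When $j^*\le i-1$, FIFO gives $a^{f,j^*}\le d^{f,j^*}\le d^{f,i-1}<t$, so this is automatic; when $j^*=i$ and $a^{f,i}\le t$ we can still take $s$ immediately before $a^{f,i}$. The remaining case $j^*=i$ with $a^{f,i}>t$ corresponds to the flow being idle at $t$ (packet $p^{f,i}$ arrives only after $t$); here taking $s=t$ in the convolution gives $A^f\otimes\beta(t)\le A^f(t)=\sum_{k=1}^{i-1}l^{f,k}\le A^{f*}(t)$, so the left-hand side of (\ref{4ssd-1}) is non-positive while the right-hand side reduces to $R(d^{f,i}-a^{f,i})\ge 0$, and the bound holds trivially. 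A final bookkeeping point is the convention for ``just before $a^{f,j^*}$'' under the discrete-time formalism of Section \ref{sec-2}; adopting $A^f$ as left-continuous at packet arrivals (or taking a limit from below) makes this well-defined without affecting the final inequality.
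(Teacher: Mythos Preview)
Your proof is correct and rests on the same key ingredients as the paper's appendix proof: the lower bound $A^{f*}(t)\ge\sum_{k=1}^{i(t)-1}l^{f,k}$, the closed form (\ref{eq-vtfa}) for $V^{f,i}$, and the inequality $t\le d^{f,i(t)}$. The organization, however, is more direct. The paper decomposes $[0,t]$ into the $i'+1$ arrival intervals $[0,a^{f,1}),[a^{f,1},a^{f,2}),\dots,[a^{f,i'},t]$ with $i'=\max\{k:a^{f,k}\le t\}$, computes the infimum in $A^f\otimes\beta(t)$ exactly on each piece, collects these into a minimum over $i'+1$ terms, and only then bounds to expose the $\max$ structure of $V^{f,i(t)}(R)$. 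You bypass this machinery by first identifying the maximizer $j^*$ in (\ref{eq-vtfa}) and then evaluating the convolution at the single test point $s=a^{f,j^*-}$; since any feasible $s$ upper-bounds the infimum, the full piecewise computation is unnecessary. This buys brevity, at the price of the small case split for $j^*=i$ with $a^{f,i}>t$, which the paper's decomposition via $i'$ absorbs uniformly. Your closing caveat about the ``just before $a^{f,j^*}$'' convention is at the same level of rigor as the paper, which writes $-a^{f,j}$ for the infimum over $s\in[a^{f,j-1},a^{f,j})$ without further comment.
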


For the considered single node system with single input flow $f$, consider any packet $p^{f,i}$. There are two cases. One case is that when $p^{f,i}$ arrives, the system is idle, which is $a^{f,i} > d^{f, i-1}$. Hence, $d^{f,i} = a^{f,i} + \frac{l^{f,i}}{C}$. Another case is that $p^{f,i}$ arrives, the system is busy, which is $a^{f,i} \le d^{f, i-1}$. Then, it has to wait until the previous packet $p^{f,i-1}$ has finished service. Hence, $d^{f,i} = d^{f,i-1} + \frac{l^{f,i}}{C}$. Combining both cases, we must have $d^{f,i} = \max \{a^{f,i}, d^{f,i-1} \} + \frac{l^{f,i}}{C}$. Comparing it with $V^{f,i}(C)$, the following result is proved.

\begin{lemma}\label{lm-ssdwc}
For the considered single node system with single input flow $f$, there holds, for any packet $p^{f,i}$ of the flow,
\begin{equation}\label{eq-ab01}
d^{f,i} = V^{f,i}(C).
\end{equation}
\end{lemma}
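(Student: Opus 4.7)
The plan is to establish (\ref{eq-ab01}) by induction on the packet index $i$, exploiting the fact that the recursion governing $d^{f,i}$ in this work-conserving, constant-capacity, FIFO single-flow setting matches the definition (\ref{eq-vtf}) of $V^{f,i}(C)$ term by term. The paragraph preceding the lemma already gives the substance of the argument; what remains is to wrap it in an induction with the convention $d^{f,0}=V^{f,0}=0$.

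For the base case $i=1$, I would observe that since all queues are initially empty, $p^{f,1}$ arrives at $a^{f,1}$ to an idle node. Work conservation forces service to start immediately, and constant capacity $C$ makes the service time exactly $l^{f,1}/C$, so $d^{f,1}=a^{f,1}+l^{f,1}/C$. This equals $\max\{a^{f,1},V^{f,0}\}+l^{f,1}/C = V^{f,1}(C)$.

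For the inductive step, assume $d^{f,i-1}=V^{f,i-1}(C)$ and split into the two cases identified just before the lemma. If $a^{f,i}>d^{f,i-1}$, the node is idle at $a^{f,i}$ (since $f$ is the only flow), so work conservation gives $d^{f,i}=a^{f,i}+l^{f,i}/C$. If $a^{f,i}\le d^{f,i-1}$, FIFO prevents $p^{f,i}$ from being served ahead of $p^{f,i-1}$, and work conservation starts its service the instant $p^{f,i-1}$ finishes, so $d^{f,i}=d^{f,i-1}+l^{f,i}/C$. Combining both cases yields $d^{f,i}=\max\{a^{f,i},d^{f,i-1}\}+l^{f,i}/C$; applying the induction hypothesis and comparing with (\ref{eq-vtf}) gives $d^{f,i}=V^{f,i}(C)$.

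The only real obstacle is to make explicit the reliance on all three structural assumptions of the model (work conservation, intra-flow FIFO, and constant capacity $C$), since removing any one of them breaks the recursion. There is no genuine technical difficulty here: the lemma is essentially a Lindley-type departure recursion for a work-conserving constant-rate FIFO server, restated so as to connect $d^{f,i}$ with the virtual time function that was just defined for use in Lemma \ref{lm-relation}.
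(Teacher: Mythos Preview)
Your proof is correct and follows essentially the same approach as the paper: the paragraph immediately preceding the lemma already derives $d^{f,i}=\max\{a^{f,i},d^{f,i-1}\}+l^{f,i}/C$ via the same idle/busy case split and then compares with (\ref{eq-vtf}). You have simply made the implicit induction and the base case explicit, which is fine but not a different argument.
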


Applying (\ref{eq-ab01}) to Lemma \ref{lm-relation}, the following is obtained for $R=C$:
\begin{equation} \label{pkt-1}
A^{f}\otimes\beta(t) - A^{f*}(t) \le l^{f, i(t)}.
\end{equation}

It is worth highlighting that $i(t)$ is random and packet $p^{f,i(t)}$ may be different from one sample path to another sample path.
\nop{
In general, we have
\begin{equation}
P\{ A^{f}\otimes\beta(t) - A^{f*}(t) >x\} \le P\{\max \{l^{f, 1}, \dots, l^{f, i(t)}\}>x\}
\end{equation}
which is consistent with the discussion in Section \ref{sec-4}. 
}
However, if all packets, $l^{f, 1}, l^{f, 2}, \dots$, have identically distributed packet lengths with CCDF $\bar{F}^{l}(x)$, or more generally if their lengths have the same upper-bounded CCDF $\bar{F}^{l}(x)$, the following follows from (\ref{pkt-1}). 
%However, if the packet length process, $l^{f, 1}, l^{f, 2}, \dots$, is stationary with CCDF (or an upper-bounded CCDF)  $\bar{F}^{l}(x)$, the following follows from (\ref{pkt-1}). 
\begin{equation}
P\{ A^{f}\otimes\beta(t) - A^{f*}(t) >x\} \le \bar{F}^{l}(x).
\end{equation}

Summarizing the above discussion, we have validated the conjecture. Formally, the following theorem has been proved:

\begin{theorem}\label{th-abssc}
%Consider a work-conserving system with constant capacity $C$ serving a flow $f$. Suppose packets of the flow have an i.i.d packet length distribution whose CCDF is (or upper-bounded by) $\bar{F}_{l}(x)$. Then, the system provides to the flow a stochastic service curve  $\beta(t)$ with bounding function $\bar{G}(x)$ with
Consider a work-conserving system with constant capacity $C$ serving a flow $f$. \nop{Suppose the packet length process is stationary with packet length distribution CCDF (or upper-bounded CCDF) $\bar{F}^{l}(x)$.} Suppose that all packets have length distributions that are identical with CCDF $\bar{F}^{l}(x)$ or whose CCDFs are all upper-bounded by $\bar{F}^{l}(x)$. Then, the system provides to the flow a stochastic service curve $\beta(t) = C \cdot t$ with bounding function $\bar{G}(x) = \bar{F}^{l}(x)$.
%\begin{eqnarray}
%\beta(t) &=& C \cdot t \nonumber \\
%\bar{G}(x) &=& \bar{F}^{l}(x)
%\end{eqnarray}
\end{theorem}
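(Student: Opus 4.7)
The plan is to chain the two preceding lemmas to produce a sample-path bound of the form $A^f \otimes \beta(t) - A^{f*}(t) \le l^{f, i(t)}$, and then use the uniform packet-length CCDF hypothesis to pass to a probability statement matching Definition \ref{def-ssc}.

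The first step is essentially bookkeeping. Lemma \ref{lm-ssdwc} already shows that in the single-node constant-capacity single-flow setting, the FIFO departure recursion for $d^{f,i}$ matches the recursion (\ref{eq-vtf}) defining $V^{f,i}(C)$ term-for-term, so $d^{f,i} = V^{f,i}(C)$ for every $i$. Substituting this into Lemma \ref{lm-relation} with $R = C$ annihilates the term $R\cdot[d^{f,i(t)} - V^{f,i(t)}(R)]$, producing the pathwise inequality $A^f \otimes \beta(t) - A^{f*}(t) \le l^{f, i(t)}$ with $\beta(t) = C t$, which is already recorded as (\ref{pkt-1}) in the excerpt. From this sample-path bound, $\{A^f \otimes \beta(t) - A^{f*}(t) > x\} \subseteq \{l^{f, i(t)} > x\}$, so it suffices to show $P\{l^{f, i(t)} > x\} \le \bar{F}^l(x)$.

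The delicate step, and the main obstacle, is exactly this probabilistic reduction. Because the events $\{i(t) = k\}$ are disjoint over $k \ge 1$, we can decompose
\begin{equation*}
P\{l^{f, i(t)} > x\} \;=\; \sum_{k \ge 1} P\{i(t) = k,\ l^{f,k} > x\} \;=\; \sum_{k \ge 1} P\{l^{f,k} > x \mid i(t) = k\}\, P\{i(t) = k\},
\end{equation*}
and the plan is to argue, using the theorem's hypothesis that every marginal $l^{f,k}$ has CCDF dominated by the common $\bar{F}^l$, that each conditional factor is uniformly at most $\bar{F}^l(x)$, so that the weighted sum collapses to $\bar{F}^l(x)$. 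What makes this non-automatic is that $i(t)$ is defined through the departure recursion and therefore depends on $l^{f,k}$ itself (through the constraint $d^{f,k-1} < t \le d^{f,k}$), so $l^{f,i(t)}$ is a length picked out by a data-dependent selector rather than an externally supplied index. This is the same obstacle that previously forced the running maximum $L^{max}(t)$ into the bounding function in Section \ref{sec-4}; the virtual-time-function detour of Lemma \ref{lm-relation} is precisely what replaces the running maximum by a single packet $p^{f,i(t)}$, so that only a marginal CCDF, and not its $t$-growing envelope, should be required. Once the conditional bound is settled, the resulting inequality matches Definition \ref{def-ssc} with $\beta(t) = C t$ and $\bar{G}(x) = \bar{F}^l(x)$, as claimed.
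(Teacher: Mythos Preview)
Your approach is identical to the paper's: chain Lemma~\ref{lm-ssdwc} into Lemma~\ref{lm-relation} with $R=C$ to obtain the sample-path bound (\ref{pkt-1}), then pass to a probability statement via the packet-length hypothesis. The paper is actually less explicit than you are at this point: it simply asserts that (\ref{pkt-1}) together with the identical-CCDF hypothesis gives $P\{A^f\otimes\beta(t)-A^{f*}(t)>x\}\le\bar F^l(x)$ and declares the theorem proved.

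The route you sketch for closing that step---bounding each conditional $P\{l^{f,k}>x\mid i(t)=k\}$ by $\bar F^l(x)$---does not work, and this is exactly the dependence obstacle you correctly anticipate. The selector $i(t)$ is size-biased toward long packets (a packet still in service at a fixed $t$ is more likely to be long, an inspection-paradox effect), so conditioning on $\{i(t)=k\}$ can inflate the tail of $l^{f,k}$. Concretely: take $C=1$, two packets with i.i.d.\ lengths uniform on $\{1,3\}$ arriving at times $0$ and $100$, and $t=2$. Then $\{i(t)=1\}=\{l^{f,1}=3\}$, so $P\{l^{f,1}>2\mid i(t)=1\}=1$, and unconditionally $P\{l^{f,i(t)}>2\}=\tfrac34>\tfrac12=\bar F^l(2)$. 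Thus neither the conditional bound nor the aggregated bound on $l^{f,i(t)}$ follows from the marginal hypothesis alone. In this same example the theorem's conclusion does hold (the left side of (\ref{pkt-1}) is $0$ whenever $i(t)=2$, since the system is idle at $t$), so the slack in (\ref{pkt-1}) on idle sample paths is what would have to be exploited; but that requires an argument different from the conditional decomposition you outline, and the paper does not supply one either.
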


\subsection{Delay Bounds}

With Theorem \ref{th-abssc}, the following delay bound follows directly from Proposition \ref{lm-dbd}.

%\begin{theorem}\label{th-ssdwc-1a}
\begin{corollary}\label{th-ssdwc-1a}
Under the same condition as for Theorem \ref{th-abssc}, if the traversing flow has a v.b.c stochastic arrive curve $\alpha(t) = r^f \cdot t$ with bounding function $\bar{F}^f$ and $r^f \le C$, then for any packet $p^{f,i}$, its delay is bounded as:
\begin{equation} \label{ssdwc-1a}
P\{D^{f,i} > \tau \} \le \bar{F}^l \otimes \bar{F}^f (C \cdot \tau) 
\end{equation}
\end{corollary}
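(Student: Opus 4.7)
My plan is to combine the stochastic service curve of Theorem~\ref{th-abssc} with the generic delay bound in Proposition~\ref{lm-dbd}. Theorem~\ref{th-abssc} supplies the service curve $\beta(t) = C\cdot t$ with bounding function $\bar{G}(x) = \bar{F}^l(x)$, and by hypothesis the flow has v.b.c.\ SAC $\alpha^f(t) = r^f\cdot t$ with bounding function $\bar{F}^f$. Plugging these data into (\ref{sdb}) gives, for every $x\ge 0$,
\begin{equation*}
P\{D^f(t) > h(\alpha^f+x,\beta)\} \le \bar{F}^f \otimes \bar{F}^l(x),
\end{equation*}
so the work reduces to computing the horizontal distance $h(\alpha^f+x,\beta)$ for these two affine curves and then recasting the bound in the form stated in the corollary.

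The next step is to unfold the horizontal distance using definition (\ref{hd}). I would write
\begin{equation*}
h(\alpha^f+x,\beta) = \sup_{s\ge 0}\inf\bigl\{\tau\ge 0: r^f s + x \le C(s+\tau)\bigr\} = \sup_{s\ge 0}\frac{(r^f-C)s + x}{C}.
\end{equation*}
Because $r^f\le C$, the expression $(r^f-C)s + x$ is non-increasing in $s$, so the supremum is attained at $s=0$ and equals $x/C$. Reparametrizing via $\tau = x/C$, i.e.\ $x = C\tau$, turns the bound into $P\{D^f(t)>\tau\}\le \bar{F}^f\otimes\bar{F}^l(C\tau)$. Transferring from the virtual delay $D^f(t)$ to the per-packet delay $D^{f,i}$ via the FIFO identity $D^{f,i}=D^f(a^{f,i})$ noted in Section~\ref{sec-2} finishes the argument.

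I do not anticipate any genuine obstacle here: the corollary is flagged as an immediate consequence of Proposition~\ref{lm-dbd}, and the substantive content sits in Theorem~\ref{th-abssc}, which is already established. The only care needed is in the horizontal-distance evaluation, where the hypothesis $r^f\le C$ is essential to keep the supremum finite; I would make a point of invoking it explicitly so that the reader sees why the stability condition enters exactly where it does.
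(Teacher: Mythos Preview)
Your proposal is correct and matches the paper's approach exactly: the paper simply states that the corollary ``follows directly from Proposition~\ref{lm-dbd}'' together with Theorem~\ref{th-abssc}, and you have spelled out precisely that direct application, including the horizontal-distance computation $h(\alpha^f+x,\beta)=x/C$ under $r^f\le C$ and the passage from $D^f(t)$ to $D^{f,i}$ via FIFO. There is nothing to add.
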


%We would like to highlight that there is inherent dependence between $A^{f}(t)$ and $l^{f, i(t)}$, since $A(t) = \sum_{i=1}^{i(t)} l^{f,i}$, where $i(t)$ denotes the last arrival packet before $t$, and $l(t)$ is the length of this packet. This partly explains why the min-plus convolution appears on the right hand side of (\ref{ssdwc-1a}) and that in Proposition \ref{lm-dbd}. 
In Section \ref{sec-4}, we have discussed the inherent dependence between the arrival process and the service process. When it comes to the delay bound analysis, we would like to highlight that the inherent dependence is specifically seen between $A^{f}(t)$ and $l^{f, i(t)}$, since by their definitions, $l^{f, i(t)}$ may be counted in $A^{f}(t)$ . This partly explains why the min-plus convolution appears on the right hand side of (\ref{ssdwc-1a}) and in Proposition \ref{lm-dbd}, which assumes no knowledge of potential independence information. 

At this moment, it seems that nothing more than Corollary~\ref{th-abssc} could be done for delay bound analysis. In the following, we show that this is too pessimistic. Specifically, by exploiting the idea of the virtual time function, an improved delay bound is proved in the following theorem.

\begin{theorem}\label{th-ssdwc-2a}
Under the same condition as for Theorem \ref{th-abssc}, if the traversing flow has a v.b.c stochastic arrive curve $\alpha(t) = r^f \cdot t$ with bounding function $\bar{F}^f$ and $r^f \le C$, then for any packet $p^{f,i}$, its delay is bounded as (a.s.):
\begin{equation}\label{ssdwc-2a}
P\{D^{f,i} > \tau \} \le \bar{F}^f (C \cdot \tau).
\end{equation}
\end{theorem}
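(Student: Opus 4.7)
The plan is to sidestep the route via a stochastic service curve (which forces the extra $\bar F^l$ factor in Corollary~\ref{th-ssdwc-1a} through Lemma~\ref{lm-relation}) and instead read the packet delay directly off the virtual time function supplied by Lemma~\ref{lm-ssdwc}, then match the resulting sample-path expression against the v.b.c arrival curve description of the flow.

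Concretely, I would first combine $d^{f,i}=V^{f,i}(C)$ with the explicit form (\ref{eq-vtfa}) and subtract $a^{f,i}$ to obtain the sample-path identity
\begin{equation*}
C \cdot D^{f,i} \;=\; \max_{1 \le j \le i}\Big\{\, \sum_{k=j}^{i} l^{f,k} \;-\; C\,(a^{f,i}-a^{f,j}) \,\Big\}.
\end{equation*}
The next step is to recognise $\sum_{k=j}^{i} l^{f,k}$ as the cumulative arrivals over the window covering packets $j,\dots,i$, i.e.\ $A^f(a^{f,i}) - A^f((a^{f,j})^{-})$, so that the packet lengths of the flow are absorbed back into its own arrival process. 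This substitution is precisely where the ``last packetizer may be ignored'' intuition enters formally: the length $l^{f,i}$ already sits inside $A^f(a^{f,i})$, so no separate $\bar F^l$ has to be carried along.

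Using $r^f \le C$ together with $a^{f,i}-a^{f,j} \ge 0$, each term in the maximum is bounded by
\begin{equation*}
\sum_{k=j}^{i} l^{f,k} - C(a^{f,i}-a^{f,j}) \;\le\; A^f(a^{f,i}) - A^f((a^{f,j})^{-}) - r^f(a^{f,i}-a^{f,j}),
\end{equation*}
and the right-hand side is in turn at most $\sup_{0 \le s \le a^{f,i}}[A^f(a^{f,i}) - A^f(s) - r^f(a^{f,i}-s)] = A^f(a^{f,i}) - A^f \otimes \alpha(a^{f,i})$ with $\alpha(t)=r^f t$, the supremum being approached as $s \uparrow a^{f,j}$. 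Taking the maximum over $j$ yields the deterministic sample-path bound $C \cdot D^{f,i} \le A^f(a^{f,i}) - A^f \otimes \alpha(a^{f,i})$. Applying the v.b.c stochastic arrival curve hypothesis (Definition~\ref{def-sac}) at $t=a^{f,i}$ with threshold $C\tau$ immediately delivers (\ref{ssdwc-2a}).

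The only delicate point is the left-limit step: one has to argue that the supremum defining $A^f \otimes \alpha$ does attain, or is arbitrarily closely approached by, its value at $s=(a^{f,j})^{-}$. In the model of Section~\ref{sec-2} this reduces to using any instant at which every arrival strictly before $p^{f,j}$ has already occurred but $p^{f,j}$ itself has not, which is exactly what the v.b.c arrival curve's sup over $s \in [0,a^{f,i}]$ allows. Once this subtlety is dispatched, the rest of the argument is a chain of elementary inequalities, and crucially no convolution with $\bar F^l$ ever appears.
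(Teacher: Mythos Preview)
Your proposal is correct and follows essentially the same route as the paper's proof: start from $D^{f,i}=V^{f,i}(C)-a^{f,i}$ via Lemma~\ref{lm-ssdwc} and (\ref{eq-vtfa}), rewrite the packet-length sums as increments of $A^f$, use $r^f\le C$, and bound by the supremum defining the v.b.c arrival curve. The only cosmetic difference is that the paper handles the left-limit issue you flag by introducing an $\epsilon\to 0$ shift (writing $A^f(a^{f,j}-\epsilon,a^{f,i})$) rather than your $(a^{f,j})^{-}$ notation.
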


\begin{proof}
Consider any sample path of the system. By the definition of $D^{f,i}$ and with Lemma \ref{lm-ssdwc}, we have
\begin{eqnarray}
&& D^{f,i} = d^{f,i} - a^{f,i} = V^{f,i}(C) - a^{f,i} \nonumber \\
%&=&  \max_{1 \le j \le i} \{a^{f,j}+ \frac{\sum_{k=j}^{i}l^{f,k}}{C} \} - a^{f,i} \nonumber \\
&=& \frac{1}{C} \max_{1 \le j \le i} \{\sum_{k=j}^{i}l^{f,k} - C (a^{f,i}-a^{f,j}) \}  \nonumber \\
&\le& \frac{1}{C} \max_{0 \le j \le i} \{A^{f}(a^{f,j}-\epsilon, a^{f,i}) - C(a^{f,i} - a^{f,j} + \epsilon)\}  + \epsilon \nonumber \\ && \label{m-step-1}\\ 
%&\le& \frac{\max_{0 \le j \le i} \{A^{f}(a^{f,j}-\epsilon, a^{f,i}) - C(a^{f,i} - a^{f,j} + \epsilon)\}}{C}  + \epsilon \label{m-step-1}\\ 
%&\le& \frac{1}{C} \sup_{0 \le s \le t} \{A^{f}(s, t) - C (t - s)\} + \epsilon \nonumber \\
&\le& \frac{1}{C} \sup_{0 \le s \le t} \{A^{f}(s, t) - r^{f} (t - s)\} + \epsilon 
\end{eqnarray}
where $t=a^{f,i}$, $\epsilon \to 0$ and $r^{f} \le C$. In step (\ref{m-step-1}), $\epsilon \to 0$ is introduced such that $A^{f}(a^{f,j}-\epsilon, a^{f,i})$ includes all arrivals in $[a^{f,j}, a^{f,i}]$. 

Since the traversing flow has a v.b.c stochastic arrive curve\nop{ $\alpha(t) = r^f \cdot t$ with bounding function $\bar{F}^f$}, we have by definition:
$$
P \{ \sup_{0 \le s \le t} \{A^{f}(s, t) - r^{f} (t - s)\} >x \} \le \bar{F}^f.
$$
Since this bounding function holds for all sample paths, (\ref{ssdwc-2a}) is then obtained. %with it and letting $\epsilon \to 0$, 
\end{proof}

\vspace{3mm}

At a first glance, the delay bound in Theorem \ref{th-ssdwc-2a} may seem to be surprising, since the packetization effect is not directly seen from (\ref{ssdwc-2a}). However, an alert reader may have noticed that it is indeed consistent with a result in the deterministic network calculus literature, which states that in delay bound analysis, {\em the last packetizer on the path of the flow may be ignored} \cite{Chang00}\cite{NetCal}. Theorem \ref{th-ssdwc-2a} proves this property in the context of stochastic network calculus for the single node case. 

{\bf Remark:} An implication of Theorem \ref{th-ssdwc-2a} is that, when delay bound analysis is performed, the node may be treated as if it would provide a deterministic service curve $C \cdot t$ and then Corollary \ref{th-ssdwc-1a} becomes the same as Theorem \ref{th-ssdwc-2a}. %This implication will help ease the delay bound analysis. 

\section{Stochastic Service Curves and Delay Bounds: The Case with Cross Traffic} \label{sec-6}

In this section, we consider the case where the traversing flow shares service of the node with the crossing flow. Specifically, we find a stochastic service curve for the node and two SSCs for the traversing flow, followed by deriving delay bounds for the traversing flow. 

\subsection{Stochastic Service Curves}

\subsubsection{A direct result}

Let us treat the traversing flow and the crossing flow as an aggregate flow. For packets of the aggregate flow $g$, {\em which takes the packet order as that on the output link}, the following relation can be easily verified:
\begin{equation}\label{eq-ab1}
d^{g,j} = \max\{a^{g,j}, d^{g,j-1}\} + \frac{l^{g,j}}{C}.
\end{equation}
Comparing (\ref{eq-ab1}) with (\ref{eq-vtf}), it is clear that for the aggregate, $d^{g,j} = V^{g,j}(C)$. 

Note that in presenting (\ref{eq-ab1}), we do not make any assumption on the scheduling algorithm between the two flows, and (\ref{eq-ab1}) is only concerned about the aggregate. We call (\ref{eq-ab1}) the ``aggregate behavior'' of the node, which is in consistence with the definition of the aggregate per-hop behavior \cite{Jiang-ton06} under the Differentiated Services (DiffServ)\nop{ Internet QoS} architecture \cite{DiffServ}. %EF-PHB

With (\ref{eq-ab1}) and following the same proof of Theorem \ref{th-abssc}, the following result can be verified.

\begin{lemma}\label{lm-abssc}
Consider a work-conserving system with constant capacity $C$, shared by a traversing flow $f$ and a crossing flow $f^{c}$. \nop{Suppose all packets in the system have a stationary distribution with CCDF (or upper-bounded CCDF) $\bar{F}^{l^g}(x)$.} Suppose that all packets have length distributions that are identical with CCDF $\bar{F}^{l^g}(x)$ or whose CCDFs are all upper-bounded by $\bar{F}^{l^g}(x)$. Then, the system provides to the aggregate of the two flows an ``aggregate behavior'' stochastic service curve $C \cdot t$ with bounding function $\bar{F}^{l^g}(x)$.
\end{lemma}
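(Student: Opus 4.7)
The plan is to reuse the argument that established Theorem \ref{th-abssc}, simply replacing the single flow $f$ everywhere by the aggregate flow $g$ defined on the output link. The central observation enabling this substitution is that equation (\ref{eq-ab1}) is structurally identical to the virtual time recursion (\ref{eq-vtf}) with rate parameter $R = C$. A short induction on $j$, starting from $V^{g,0}=0$ and $d^{g,0}=0$, therefore yields $d^{g,j} = V^{g,j}(C)$ for all $j \ge 1$, which is exactly the aggregate counterpart of Lemma \ref{lm-ssdwc}.

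With this identity in hand, the next step is to apply Lemma \ref{lm-relation} to the aggregate flow $g$ with rate parameter $R = C$. The lemma gives, on any sample path,
\[
A^{g}\otimes\beta(t) - A^{*g}(t) \le C\bigl[d^{g,j(t)} - V^{g,j(t)}(C)\bigr] + l^{g,j(t)},
\]
where $\beta(t) = C \cdot t$ and $j(t) = \min\{k: d^{g,k} \ge t\}$. Substituting $d^{g,j(t)} = V^{g,j(t)}(C)$ collapses the first bracket to zero, leaving the pathwise bound $A^{g}\otimes\beta(t) - A^{*g}(t) \le l^{g,j(t)}$. The argument then concludes by taking probabilities: since by hypothesis every $l^{g,k}$ has CCDF dominated by $\bar{F}^{l^g}$, regardless of which random index $j(t)$ the sample path picks out, the right-hand side satisfies $P\{l^{g,j(t)} > x\} \le \bar{F}^{l^g}(x)$. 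This delivers $P\{A^{g}\otimes\beta(t) - A^{*g}(t) > x\} \le \bar{F}^{l^g}(x)$, which is exactly Definition \ref{def-ssc} applied to the aggregate.

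The step I expect to need the most care is justifying that Lemma \ref{lm-relation} legitimately applies to the aggregate, since $g$ is indexed by output order rather than arrival order and the inter-flow scheduling discipline is deliberately left unspecified. The reassurance is that Lemma \ref{lm-relation} is stated for any flow and any system; its hypotheses depend only on the cumulative arrival and departure processes $A^{g}$, $A^{*g}$ together with per-packet quantities $a^{g,j}$, $l^{g,j}$, $d^{g,j}$, and on the consistency between these and $V^{g,j}$, rather than on FIFO with respect to arrival order. Because the node is work-conserving at rate $C$ and packets are transmitted one at a time, the aggregate is trivially FIFO in its own (output) indexing, which is what (\ref{eq-ab1}) already encodes. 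Hence the lemma applies verbatim, and the bound holds uniformly over every work-conserving scheduling policy between $f$ and $f^{c}$, which is precisely the ``aggregate behavior'' character the statement claims.
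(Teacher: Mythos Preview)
Your proposal is correct and follows exactly the approach the paper indicates: the paper's own proof of Lemma~\ref{lm-abssc} is the single sentence ``With (\ref{eq-ab1}) and following the same proof of Theorem~\ref{th-abssc}, the following result can be verified,'' and you have faithfully unpacked that---establishing $d^{g,j}=V^{g,j}(C)$ from (\ref{eq-ab1}), invoking Lemma~\ref{lm-relation} with $R=C$, collapsing the bracket, and bounding $P\{l^{g,j(t)}>x\}$ by $\bar F^{l^g}(x)$. Your added paragraph on why Lemma~\ref{lm-relation} remains applicable to the output-indexed aggregate is a point the paper passes over silently, so you have in fact been more careful than the original.
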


Recall that we are interested in the traversing flow. With Lemma \ref{lm-abssc} and the leftover property Proposition \ref{lm-los}, the following stochastic service curve to the traversing flow is obtained\footnote{Strictly speaking, instead of directly applying Proposition \ref{lm-los}, a separate proof is needed.}:

\begin{theorem}\label{th-sscct}
Consider the same system as in Lemma \ref{lm-abssc}. If the the crossing flow has a v.b.c. stochastic arrival curve $\alpha^c(t)= r^{c} \cdot t$, ($r^c < C$), with bounding function $\bar{F}^c(x)$, then the system provides to the traversing flow a stochastic service curve $\beta(t) = (C - r^{c})\cdot t$ with bounding function $\bar{G}(x)$ as %, where
\begin{eqnarray}
%\beta(t) &=& (C - r^{c})\cdot t  \nonumber \\
\bar{G}(x) &=& \bar{F}^c \otimes \bar{F}^{l^g}(x) \label{ssc-1}
\end{eqnarray}
\end{theorem}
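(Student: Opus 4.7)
My plan is to replay the proof of the leftover service property (Proposition~\ref{lm-los}) but applied pathwise to the aggregate behavior SSC from Lemma~\ref{lm-abssc}, since the footnote warns that Proposition~\ref{lm-los} cannot be invoked as a black box (the aggregate's SSC is stated with respect to $A^g, A^{*g}$, not $A^f, A^{*f}$). The key identities I will use throughout are the conservation laws $A^g(t)=A^f(t)+A^c(t)$ and $A^{*g}(t)=A^{*f}(t)+A^{*c}(t)$, plus the trivial causality bound $A^{*c}(t)\le A^c(t)$.

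The heart of the argument is a pointwise algebraic inequality. Fix $t$ and $s\in[0,t]$, and consider the term inside the infimum defining $A^f\otimes\beta(t)-A^{*f}(t)$ with $\beta(\tau)=(C-r^c)\tau$. I rewrite
\begin{align*}
A^f(s)+(C-r^c)(t-s)-A^{*f}(t)
&= \bigl[A^g(s)+C(t-s)-A^{*g}(t)\bigr] \\
&\quad + \bigl[A^{*c}(t)-A^c(s)-r^c(t-s)\bigr].
\end{align*}
Using $A^{*c}(t)\le A^c(t)$, the second bracket is at most $A^c(s,t)-r^c(t-s)$. Taking the infimum over $s$ on the left and bounding the second bracket by its supremum on the right gives
\begin{align*}
A^f\otimes\beta(t)-A^{*f}(t)
&\le \bigl[A^g\otimes(C\cdot)(t)-A^{*g}(t)\bigr] \\
&\quad + \bigl[A^c(t)-A^c\otimes\alpha^c(t)\bigr].
\end{align*}

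Once this pathwise inequality is established, the rest is a standard ``tail of a sum'' step. Let $Y$ and $Z$ denote the two bracketed quantities. By Lemma~\ref{lm-abssc}, $P\{Y>y\}\le\bar{F}^{l^g}(y)$; by the v.b.c.\ SAC hypothesis, $P\{Z>z\}\le\bar{F}^c(z)$. For any split $x=y+z$, the event $\{Y+Z>x\}$ is contained in $\{Y>y\}\cup\{Z>z\}$, so a union bound gives $P\{Y+Z>x\}\le\bar{F}^{l^g}(y)+\bar{F}^c(x-y)$, and taking the infimum over $y\in[0,x]$ yields the min-plus convolution $\bar{F}^c\otimes\bar{F}^{l^g}(x)$, which is exactly $\bar{G}(x)$.

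The main obstacle I expect is purely bookkeeping rather than conceptual: verifying that the ``aggregate behavior'' SSC of Lemma~\ref{lm-abssc}, although derived via the sample-path identity $d^{g,j}=V^{g,j}(C)$ and therefore indexed by the output packet ordering, still satisfies the standard inequality $P\{A^g\otimes(C\cdot)(t)-A^{*g}(t)>x\}\le\bar{F}^{l^g}(x)$ at every continuous time $t$ (not only at packet departure epochs). Given that, and the requirement $r^c<C$ which ensures $\beta(t)=(C-r^c)t$ is a valid nondecreasing service curve, the construction goes through. No independence between the flows is needed for this result; that refinement is what the later SSC in the section is designed to exploit.
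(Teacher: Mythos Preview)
Your proposal is correct and is precisely the argument the paper gestures at: the paper states Theorem~\ref{th-sscct} as an immediate consequence of Lemma~\ref{lm-abssc} together with the leftover service property (Proposition~\ref{lm-los}), with only a footnote conceding that a separate pathwise proof is strictly required. You have supplied exactly that missing argument---the algebraic decomposition via $A^g=A^f+A^c$, $A^{*g}=A^{*f}+A^{*c}$, and $A^{*c}(t)\le A^c(t)$, followed by the standard $\inf(f+g)\le\inf f+\sup g$ step and the union-bound/min-plus tail estimate---so your route coincides with the paper's intended one but is more complete; the ``obstacle'' you flag about Lemma~\ref{lm-abssc} holding at every $t$ is already handled by Lemma~\ref{lm-relation}, which is stated for arbitrary $t>0$.
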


Note that in Theorem \ref{th-sscct}, the resulting bounding function is $\bar{F}^c \otimes \bar{F}^{l^g}(x)$, which assumes no knowledge of potential independence information, even though the crossing flow may be independent of the traversing flow. This is due to that ${F}^{l^g}$ is the length distribution of all packets, which include packets of the crossing flow, and hence ${F}^{l^g}$ is inherently coupled with the traffic arrival process of the crossing flow. 

%\vspace{3mm}

\subsubsection{An improved result}

While Theorem \ref{th-sscct} is an improvement over those that are based on (\ref{eq-lmax}), it may be difficult to find $\bar{F}^{l^g}$ of the aggregate particularly when the traversing flow and the crossing flow have different packet length distributions. 

The following theorem proves another stochastic service curve for the traversing flow, where there is no need to find $\bar{F}^{l^g}$, relieving the difficulty. In addition, if the two flows are independent, this independence information is made use of.% in deciding the bounding function. 

\begin{theorem}\label{th-sscwc}
Consider a work-conserving system with constant capacity $C$, shared by a traversing flow $f$ and a crossing flow $f^{c}$. 
Suppose the crossing flow has a v.b.c stochastic arrival curve $\alpha^{c}(t) = r^c \cdot t$, ($r^c < C$), with bounding function $\bar{F}^c$, and suppose all packets of the traversing flow have length distributions that are identical with CCDF $\bar{F}^{l}(x)$ or whose CCDFs are all upper-bounded by $\bar{F}^{l}(x)$. \nop{suppose all packets of the traversing flow have a stationary distribution whose CCDF is (or their length distributions are upper-bounded by) $\bar{F}_{l}(x)$.} Then, the node provides to the traversing flow a stochastic service curve $\beta(t)= (C - r^{c})$ with bounding function $\bar{G}(x)$ as %, where
\begin{eqnarray}
%\beta(t) &=& (C - r^{c}) \cdot t \nonumber \\
\bar{G}(x) &=& \bar{F}^c \otimes \bar{F}^l (x) \label{ssc-2}
\end{eqnarray}
and if the two flows are independent,
\begin{eqnarray}
\bar{G}(x) &=& 1- F^{c} * F^{l} (x) \label{ssc-2i}
\end{eqnarray}
where, $F^{l} \equiv 1- \bar{F}^l$, $F^{c} \equiv 1- \bar{F}^c$, and $F_1 * F_2 (x) \equiv \int_{0}^{x}{F}_1(x-y)d {F}_2(y)$. 
\end{theorem}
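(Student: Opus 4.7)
The plan is to apply Lemma~\ref{lm-relation} with rate parameter $R = C - r^c$, so that $\beta(t) = R \cdot t$, yielding the pathwise starting bound
\[
A^f \otimes \beta(t) - A^{*f}(t) \;\le\; R\bigl[d^{f,i(t)} - V^{f,i(t)}(R)\bigr] + l^{f,i(t)} .
\]
The goal is then to control the first summand by a quantity depending only on the crossing flow, while the second already involves only a single traversing packet length. This is exactly what lets $\bar{F}^l$ (rather than the aggregate $\bar{F}^{l^g}$ used in Theorem~\ref{th-sscct}) enter, and what opens the door to exploiting independence of the two flows.

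I would bound the first summand by a busy-period argument. Fix $i = i(t)$, let $\tau$ be the start of the node's busy period containing $d^{f,i}$, and let $j^*$ be the smallest index with $a^{f,j^*} \ge \tau$. During $(\tau, d^{f,i}]$ the server is fully utilized; by FIFO the traversing bits served are exactly $\sum_{k=j^*}^i l^{f,k}$ and the crossing bits served are at most $A^c(\tau, d^{f,i})$, so $C(d^{f,i}-\tau) \le \sum_{k=j^*}^i l^{f,k} + A^c(\tau, d^{f,i})$. Setting $Y := A^c(\tau, d^{f,i}) - r^c(d^{f,i}-\tau)$ and using $C - r^c = R$ rearranges this to $R(d^{f,i}-\tau) \le \sum_{k=j^*}^i l^{f,k} + Y$. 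Since $\tau \le a^{f,j^*}$ and $V^{f,i}(R) \ge a^{f,j^*} + \sum_{k=j^*}^i l^{f,k}/R$ by~(\ref{eq-vtfa}), dividing by $R$ yields the pathwise conclusion $R[d^{f,i}-V^{f,i}(R)] \le Y$.

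To turn $Y$ into a probabilistic statement I would use $Y \le \sup_{0 \le s \le d^{f,i}}\{A^c(s, d^{f,i}) - r^c(d^{f,i}-s)\} = A^c(d^{f,i}) - A^c \otimes \alpha^c(d^{f,i})$, whose CCDF is controlled by $\bar{F}^c$ through the v.b.c.\ stochastic arrival curve of the crossing flow. Combined with the uniform traversing-packet-length bound $P(l^{f,i(t)} > x) \le \bar{F}^l(x)$ and the Lemma~\ref{lm-relation} inequality, this gives $A^f \otimes \beta(t) - A^{*f}(t) \le Y + l^{f,i(t)}$, where $Y$ is stochastically dominated by a crossing-flow quantity and $l^{f,i(t)}$ by a traversing-flow quantity. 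For~(\ref{ssc-2}) the event inclusion $\{Y + l^{f,i(t)} > x\} \subseteq \{Y > y\} \cup \{l^{f,i(t)} > x-y\}$ followed by the union bound and the infimum over $y \in [0,x]$ yields exactly $\bar{F}^c \otimes \bar{F}^l(x)$. For~(\ref{ssc-2i}), when the two flows are independent the dominating surrogates of $Y$ and $l^{f,i(t)}$ can be taken independent (each drawn from one of the two independent flows), and ordinary probabilistic convolution gives $1 - F^c * F^l(x)$ in place of the union bound.

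The main obstacle I expect is the pathwise step $R[d^{f,i}-V^{f,i}(R)] \le Y$: it requires correctly identifying the busy-period start $\tau$ and the first traversing index $j^*$, together with combining the full-utilization identity $C(d^{f,i}-\tau) = A^{*f}(\tau,d^{f,i}) + A^{*c}(\tau,d^{f,i})$, the FIFO exact count $A^{*f}(\tau,d^{f,i}) = \sum_{k=j^*}^{i} l^{f,k}$, and the arrivals-dominate-service inequality $A^{*c}(\tau,d^{f,i}) \le A^c(\tau,d^{f,i})$. A secondary subtlety, specific to the independent case~(\ref{ssc-2i}), is justifying that the pair $(Y, l^{f,i(t)})$ can be jointly stochastically dominated by an independent pair with marginals $\bar{F}^c$ and $\bar{F}^l$, even though the random index $i(t)$ a priori depends on both flows; this is handled by replacing $l^{f,i(t)}$ with an independent surrogate having CCDF $\bar{F}^l$ (legitimate because the CCDF bound on traversing packet lengths is uniform in the index).
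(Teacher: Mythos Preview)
Your proposal is correct and follows essentially the same route as the paper: the paper packages your busy-period step as Lemma~\ref{lm-sscwc} (with $t^0$, $i_0$ in place of your $\tau$, $j^*$) to obtain $d^{f,i} \le V^{f,i}(C-r^c) + Y/(C-r^c)$, then combines it with Lemma~\ref{lm-relation} at $R=C-r^c$ to get Lemma~\ref{lm-sscwc-2}, which is exactly your pathwise bound $A^f\otimes\beta(t)-A^{*f}(t) \le Y + l^{f,i(t)}$. The two subtleties you flag (the busy-period accounting and the independence justification for~(\ref{ssc-2i})) are precisely the ones the paper addresses, at the same level of rigor.
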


To prove Theorem \ref{th-sscwc}, Lemma \ref{lm-sscwc} and Lemma \ref{lm-sscwc-2} below are crucial, with which, Theorem \ref{th-sscwc} is easily verified. 
\begin{lemma}\label{lm-sscwc}
For the considered single node system with cross traffic, there holds, for any packet $p^{f,i}$ of the traversing flow, % $f$,
\begin{equation} \label{eq-10}
d^{f,i} \le V^{f,i}(C-r^c) + \frac{\sup_{0\le s \le d^{f,i}} \{A^{c}(s, d^{f,i}) - r^c(d^{f,i}-s) \}}{C-r^c}
\end{equation}
for any $C > r^c \ge 0$.
\end{lemma}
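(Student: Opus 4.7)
The plan is to analyze the busy period of the \emph{aggregate} system (i.e., the period during which the node is continuously non-idle) that contains the departure instant $d^{f,i}$, and to relate the total work completed during this busy period to the virtual time function of the traversing flow evaluated at the reduced rate $C - r^c$. Let $s^{*}$ denote the start of this busy period, i.e., the latest instant $s^{*} \le d^{f,i}$ at which the aggregate system is empty (existence is guaranteed since the node is initially empty at time $0$, so in the worst case $s^{*}=0$). By work conservation, the node serves continuously at rate $C$ throughout $(s^{*}, d^{f,i}]$, so the total bits served in this interval equal $C(d^{f,i}-s^{*})$.

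Next I would identify which packets contribute to this work. Since the aggregate system is empty at $s^{*}$, any $p^{f,k}$ with $k\le i$ that is served during $(s^{*}, d^{f,i}]$ must have arrived strictly after $s^{*}$. By FIFO within $f$, the set of such packets is a contiguous suffix $\{p^{f,j}, p^{f,j+1}, \dots, p^{f,i}\}$ for some smallest index $j \le i$, and moreover $a^{f,j} > s^{*}$. Let $B^{c}$ denote the number of crossing-flow bits served in $(s^{*}, d^{f,i}]$; since the system was empty at $s^{*}$, every such bit arrived in the same interval, hence $B^{c} \le A^{c}(s^{*}, d^{f,i})$. Therefore
\begin{equation*}
C(d^{f,i}-s^{*}) \;=\; \sum_{k=j}^{i} l^{f,k} + B^{c} \;\le\; \sum_{k=j}^{i} l^{f,k} + A^{c}(s^{*}, d^{f,i}).
\end{equation*}

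The remaining step is algebraic. Subtracting $r^{c}(d^{f,i}-s^{*})$ from both sides separates the crossing flow's ``nominal rate'' from its burstiness:
\begin{equation*}
(C-r^{c})(d^{f,i}-s^{*}) \;\le\; \sum_{k=j}^{i} l^{f,k} + \bigl[A^{c}(s^{*}, d^{f,i}) - r^{c}(d^{f,i}-s^{*})\bigr].
\end{equation*}
Dividing by $C-r^{c} > 0$ and using $s^{*} \le a^{f,j}$ yields
\begin{equation*}
d^{f,i} \;\le\; a^{f,j} + \frac{\sum_{k=j}^{i} l^{f,k}}{C-r^{c}} + \frac{A^{c}(s^{*}, d^{f,i}) - r^{c}(d^{f,i}-s^{*})}{C-r^{c}}.
\end{equation*}
The first two terms on the right are bounded by $V^{f,i}(C-r^{c})$ via the closed form (\ref{eq-vtfa}), and the last term is bounded by the supremum over $s\in[0,d^{f,i}]$ in the statement, giving the claim.

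The main obstacle I foresee is the careful justification of the busy-period bookkeeping: namely, pinning down $s^{*}$ (including edge cases where the busy period reaches back to time $0$ or where packets arrive simultaneously), arguing that the $f$-packets served in $(s^{*}, d^{f,i}]$ are exactly $\{p^{f,j},\dots,p^{f,i}\}$ with $a^{f,j} > s^{*}$, and justifying $B^{c} \le A^{c}(s^{*}, d^{f,i})$ under an \emph{unspecified} inter-flow scheduler. Once these work-conservation and FIFO-within-flow facts are nailed down, the rest is a one-line manipulation, so the real effort is in the sample-path argument rather than any probabilistic machinery.
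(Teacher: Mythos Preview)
Your proposal is correct and follows essentially the same approach as the paper's proof: identify the busy period of the aggregate system containing $d^{f,i}$ (your $s^{*}$ is the paper's $t^{0}$), use work conservation to write $C(d^{f,i}-s^{*})$ as the sum of the traversing-flow packet lengths $\sum_{k=j}^{i} l^{f,k}$ plus the crossing-flow bits served, bound the latter by $A^{c}(s^{*},d^{f,i})$ because the system is empty at $s^{*}$, subtract $r^{c}(d^{f,i}-s^{*})$ and divide by $C-r^{c}$, and finally invoke $s^{*}\le a^{f,j}$ together with the closed form (\ref{eq-vtfa}) of $V^{f,i}$ and the supremum bound on the burstiness term. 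The paper carries out exactly these steps with only cosmetic differences in notation and ordering, and your anticipated ``obstacles'' (bookkeeping of the busy period, FIFO-within-$f$ contiguity, scheduler-agnostic bound $B^{c}\le A^{c}(s^{*},d^{f,i})$) are precisely the points the paper spells out.
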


\begin{proof}
As for (\ref{eq-ab1}), let us consider the aggregate flow $g$\nop{ on the output link}. Since no specific scheduling between the two flows has been assumed, a packet, which appears earlier on the output link in the aggregate flow, may actually arrive to the node later than another packet that appears later on the output link. In other words, we may not have $a^{g,j} \ge a^{g, j-1}$. 

For any packet $p^{f,i}$, suppose its corresponding packet in the aggregate flow $g$ is $p^{g,j}$. Particularly, we suppose the departure time of $p^{f,i}$, i.e. $d^{f,i}=d^{g,j}$, is within the busy period that starts at $t^0$. Note that such a busy period always exists, since in the worst case, the period is only the service time period of $p^{f,i}$ and in this case, $t^0=a^{g,j}$. 

Since the node is work-conserving with constant service rate $C$ and it is busy with serving between $t^0$ and $d^{g,j}$, there holds:
\begin{equation}\label{eq-2}
d^{g,j} = t^0 + \frac{\sum_{k=j_0}^{j} l^{g,k}}{C},
\end{equation}
where $p^{g,j_0}$ denotes the packet whose arrival starts the busy period. 

Among packets $p^{g,j_0}, \dots, p^{g,j}$, some belong to the traversing flow and the rest the crossing flow. Let $p^{f,i_0}$ denote the first packet from the traversing flow served in the busy period. There holds $a^{f,i_0} \ge t^0$. 

Equation (\ref{eq-2}) can be re-written as:
\begin{equation}\label{eq-3a}
d^{f,i} \le t^0 + \frac{\sum_{k=i_0}^{i} l^{f,k}}{C} + \frac{A^{*c}(t^0, d^{f,i})}{C},
\end{equation}
where, by definition, $A^{*c}(t^0, d^{f,i})$ represents the total length (in bits) of packets from the crossing flow served in $(t^0, d^{f,i}]$\footnote{Note that $t^0$ starts the busy period and hence no packet finishes service at  $t^0$. This implies $A^{*c}(t^0, d^{f,i})$ indeed represents the total length (in bits) of packets from the crossing flow served in $[t^0, d^{f,i}]$.}.

Since the busy period starts at $t^0$, this implies that immediately before $t^0$, the node is idle. In other words, all packets, which arrived before $t^0$, have been served by $t^0$. So, we have $A^{*c}(t^0) = A^{c}(t^0)$. In addition, crossing flow packets, which are served before $d^{f,i}$, must have arrived by $d^{f,i}$. So, we have $A^{*c}(d^{f,i}) \le A^{c}(d^{f,i})$. Combing both, we obtain:
\begin{equation}
A^{*c}(t^0, d^{f,i}) \le A^{c}(t^0, d^{f,i})
\end{equation}
which, when applied to (\ref{eq-3a}), results in
\begin{equation}\label{eq-5a}
d^{f,i} \le t^0 + \frac{\sum_{k=i_0}^{i} l^{f,k}}{C} + \frac{A^{c}(t^0, d^{f,i})}{C}.
\end{equation}

With (\ref{eq-5a}), we obtain, for any $C > r^c \ge 0$,
\begin{eqnarray}
d^{f,i} &\le& t^0 + \frac{\sum_{k=i_0}^{i} l^{f,k}}{C} + \frac{A^{c}(t^0, d^{f,i}) - r^c (d^{f,i}-t^0)}{C} \nonumber\\
&& + \frac{r^c (d^{f,i}-t^0)}{C} \nonumber
\end{eqnarray}
Further with simple manipulation, we obtain
\begin{equation}\label{eq-6a}
d^{f,i} \le t^0 + \frac{\sum_{k=i_0}^{i} l^{f,k}}{C-r^c} + \frac{A^{c}(t^0, d^{f,i}) - r^c (d^{f,i}-t^0)}{C-r^c} 
\end{equation}

Recall the virtual time function (\ref{eq-vtf}), it is ease to verify that, for the considered packet $p^{f,i}$, we have
\begin{eqnarray}
V^{f,i}(C-r^c) &\ge& a^{f,i_0} + \frac{\sum_{k=i_0}^{i} l^{f,k}}{C-r^c} \ge t_0 + \frac{\sum_{k=i_0}^{i} l^{f,k}}{C-r^c} \nonumber\\ \label{eq-8a}
\end{eqnarray}
In addition, there holds
\begin{eqnarray}
&& A^{c}(t^0, d^{f,i}) - r^c(d^{f,i}-t^0) \nonumber \\
&\le& \sup_{0\le s \le d^{f,i}} \{A^{c}(s, d^{f,i}) - r^c (d^{f,i}-s) \} \label{eq-9a}
\end{eqnarray}

Applying (\ref{eq-8a}) and (\ref{eq-9a}) to (\ref{eq-6a}), we obtain (\ref{eq-10}) and the lemma is proved.
\end{proof}

%\vspace{3mm}
We remark that when there is no cross traffic, i.e. $A^c(t)=0$, then letting $r^c=0$, Lemma \ref{lm-sscwc} is reduced to Lemma \ref{lm-ssdwc} as expected.

Note that Lemma \ref{lm-relation} provides a general relationship, with which, by letting $R= C-r^c$ in it, we obtain
\begin{eqnarray}
&& A^{f} \otimes \beta (t) - A^{f*}(t) \nonumber \\
&\le& (C-r^c)[d^{f, i(t)} - V^{f, i(t)} (C-r^c)] + l^{f, i(t)}
\end{eqnarray}
Applying Lemma \ref{lm-sscwc} to above immediately gives the following:% result. 

\begin{lemma}\label{lm-sscwc-2}
For the considered single node system with cross traffic, for any time $t$ and any sample path of the system, the following relationship holds for the traversing flow $f$,
\begin{eqnarray}\label{eq-11}
&& A^{f} \otimes \beta (t) - A^{f*}(t) \\
&\le& \sup_{0\le s \le d^{f,i(t)}} \{A^{c}(s, d^{f, i(t)}) - r^c \cdot (d^{f, i(t)}-s) \} + l^{f, i(t)} \nonumber
\end{eqnarray}
where $\beta(t) = (C- r^c) \cdot t$, $i(t) = \min \{k: d^{f, k} \ge t \}$\nop{ implying $p^{f,i(t)}$ to be the most recent packet in flow $f$ which departs the system not before $t$}, and $l^{f, i(t)}$ is the length of packet $p^{f, i(t)}$.
\end{lemma}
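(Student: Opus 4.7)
The plan is to derive (\ref{eq-11}) as a direct chaining of Lemma \ref{lm-relation} and Lemma \ref{lm-sscwc}, the single non-obvious choice being the rate parameter $R = C - r^c$ used to invoke the former. The whole argument is a pathwise, purely deterministic manipulation; no probabilistic reasoning is needed since both inputs are already pathwise statements.

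First, I would fix a sample path and a time $t > 0$, and apply Lemma \ref{lm-relation} with $R = C - r^c$, so that its service curve $R \cdot t$ matches the $\beta(t) = (C - r^c) \cdot t$ appearing in the statement to be proved. The lemma then delivers, with no further work, the pathwise bound
\[
A^{f} \otimes \beta(t) - A^{f*}(t) \le (C - r^c)\bigl[d^{f, i(t)} - V^{f, i(t)}(C - r^c)\bigr] + l^{f, i(t)}.
\]
The $l^{f, i(t)}$ term appearing on the right is already exactly the packet-length term that we want in (\ref{eq-11}), so only the bracketed ``virtual-time gap'' term still needs to be handled.

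Next, I would invoke Lemma \ref{lm-sscwc} at the specific packet index $i = i(t)$, using the assumption $C > r^c \ge 0$. That lemma, after multiplying through by $C - r^c$, yields
\[
(C - r^c)\bigl[d^{f, i(t)} - V^{f, i(t)}(C - r^c)\bigr] \le \sup_{0 \le s \le d^{f, i(t)}} \{A^{c}(s, d^{f, i(t)}) - r^c (d^{f, i(t)} - s)\}.
\]
Substituting this into the bound from the first step produces (\ref{eq-11}) immediately.

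The only point that requires any care is confirming that Lemma \ref{lm-sscwc} can indeed be invoked at index $i(t)$: since $i(t) = \min\{k : d^{f,k} \ge t\}$ is, by definition, a well-defined traversing-flow packet index (assuming flow $f$ has at least one packet departing no earlier than $t$), and Lemma \ref{lm-sscwc} is stated for an arbitrary packet $p^{f,i}$ of the traversing flow, this substitution is legitimate. I do not anticipate any genuine obstacle; once the rate $R = C - r^c$ is chosen in Lemma \ref{lm-relation}, the proof reduces to routine algebraic substitution, with all the real work having already been carried out in the two preceding lemmas.
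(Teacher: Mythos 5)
Your proposal is correct and follows essentially the same route as the paper: the paper likewise obtains (\ref{eq-11}) by setting $R = C - r^c$ in Lemma \ref{lm-relation} and then applying Lemma \ref{lm-sscwc} at the packet index $i(t)$ to bound the term $(C-r^c)[d^{f,i(t)} - V^{f,i(t)}(C-r^c)]$. Nothing further is needed.
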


Finally, since the crossing flow has a v.b.c stochastic arrival curve $r^c t$ with bounding function $\bar{F}^c$, and all packet lengths have identical (or the same upper-bounded) CCDF $\bar{F}_l$, Theorem \ref{th-sscwc} is proved by applying these conditions to Lemma \ref{lm-sscwc-2}.

\vspace{3mm}

It is worth highlighting that while (\ref{ssc-2}) looks similar to (\ref{ssc-1}), there is a fundamental difference between them. Specifically, the packet length distribution ${F}^{l^g}$ in (\ref{ssc-1}) is that of the aggregate flow, while in (\ref{ssc-2}), the packet length distribution ${F}^{l}$ is only of the traversing flow.

\subsection{Delay Bounds}

\subsubsection{Delay bounds from Theorems \ref{th-sscct} and \ref{th-sscwc}}

With Theorems \ref{th-sscct} and \ref{th-sscwc}, the following delay bounds are directly obtained from Preposition \ref{lm-dbd} respectively. 

\begin{corollary}\label{th-sscwc-db0}
Under the same condition as for Theorem \ref{th-sscct}, if the traversing flow has a v.b.c stochastic arrival curve $\alpha(t) = r^{f} \cdot t$ with bounding function $\bar{F}^f$, and $r^{f} < C-r^c$, then for any packet $p^{f,i}$, it has a delay bound as:
\begin{equation} \label{2-db1}
P\{D^{f,i} > \tau\} \le \bar{F}^c \otimes \bar{F}^{l^g} \otimes \bar{F}^f((C-r^c)\tau).
\end{equation}
\end{corollary}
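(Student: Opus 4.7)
The plan is to exploit the fact that the corollary is labeled as a direct consequence of Proposition \ref{lm-dbd} applied to the SSC characterization in Theorem \ref{th-sscct}. No new probabilistic machinery is needed; the entire task reduces to (i) identifying the right instantiation of Proposition \ref{lm-dbd} and (ii) computing a horizontal distance.

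First I would invoke Theorem \ref{th-sscct} to assert that the node delivers to the traversing flow a stochastic service curve $\beta(t) = (C - r^c)\cdot t$ with bounding function $\bar{G}(x) = \bar{F}^c \otimes \bar{F}^{l^g}(x)$. Together with the hypothesis that the traversing flow possesses v.b.c.\ stochastic arrival curve $\alpha^f(t) = r^f \cdot t$ with bounding function $\bar{F}^f$, Proposition \ref{lm-dbd} immediately yields
\begin{equation*}
P\{ D^f(t) > h(\alpha^f + x, \beta)\} \le \bar{F}^f \otimes \bar{G}(x) = \bar{F}^f \otimes \bar{F}^c \otimes \bar{F}^{l^g}(x).
\end{equation*}

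Next I would evaluate $h(\alpha^f + x, \beta)$ explicitly. By definition,
\begin{equation*}
h(\alpha^f + x, \beta) = \sup_{s \ge 0}\inf\{\tau \ge 0: r^f s + x \le (C - r^c)(s + \tau)\}.
\end{equation*}
Solving the inner inequality gives $\tau \ge \frac{x - (C - r^c - r^f)s}{C - r^c}$. Since $r^f < C - r^c$ by hypothesis, the coefficient of $s$ is strictly negative, so the supremum over $s \ge 0$ is attained at $s = 0$, giving $h(\alpha^f + x, \beta) = x/(C - r^c)$. Substituting $\tau = x/(C - r^c)$, equivalently $x = (C - r^c)\tau$, and using FIFO together with the identification $D^{f,i} = D^f(a^{f,i})$ (as noted in the footnote attached to the definition of the virtual delay) converts the bound on $D^f(a^{f,i})$ into the claimed bound on $D^{f,i}$.

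I do not anticipate a genuine obstacle: the result is a bookkeeping corollary. The only place where one must be a little careful is verifying that $r^f < C - r^c$ is exactly the condition needed to make the supremum in the horizontal distance finite and to collapse it to the simple expression $x/(C - r^c)$; without this strict inequality the horizontal distance would diverge and no finite delay bound could be extracted.
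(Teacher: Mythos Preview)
Your proposal is correct and follows exactly the route the paper takes: the paper simply states that the corollary is ``directly obtained from Proposition \ref{lm-dbd}'' applied to the stochastic service curve of Theorem \ref{th-sscct}, and you have filled in precisely those omitted details (the horizontal-distance computation and the substitution $x=(C-r^c)\tau$). The only minor remark is that the footnote actually gives $D^{f,i}\le D^{f}(a^{f,i})$ rather than equality, but this inequality is all you need, since a tail bound on $D^{f}(a^{f,i})$ immediately dominates the tail of $D^{f,i}$.
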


\begin{corollary}\label{th-sscwc-db1}
Under the same condition as for Theorem \ref{th-sscwc}, if the traversing flow has a v.b.c stochastic arrival curve $\alpha(t) = r^{f} \cdot t$ with bounding function $\bar{F}^f$, and $\rho^{f} < C-r^c$, then for any packet $p^{f,i}$, it has a delay bound as:\\
(i) if the two flows may be dependent, 
\begin{equation} \label{2-db2}
P\{D^{f,i} > \tau\} \le \bar{F}^c \otimes \bar{F}^l \otimes \bar{F}^f((C-r^c)\tau); 
\end{equation}
(ii) if the two flows are independent,
\begin{equation}\label{2-db3}
P\{D^{f,i} > \tau\} \le (1- F^{c} * F^{l}) \otimes \bar{F}^f((C-r^c)\tau).
\end{equation}
\nop{
\begin{equation} \label{sscwc-db1}
P\{D^{f,i} > \tau\} \le \bar{G} \otimes \bar{F}^f((C-r^c)\tau)
\end{equation}
where if the two flows may be dependent, 
\begin{eqnarray}
\bar{G}(x) &=& \bar{F}^c \otimes \bar{F}^l (x) \nonumber %\label{ssc-2}
\end{eqnarray}
and if the two flows are independent,
\begin{eqnarray}
\bar{G}(x) &=& 1- F^{c} * F^{l} (x). \nonumber %\label{ssc-2i}
\end{eqnarray}
%with $F_{l}(x) \equiv 1- \bar{F}_l(x)$, $F^{c}(x) \equiv 1- \bar{F}^c(x)$. 
}
\end{corollary}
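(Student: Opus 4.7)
The plan is to obtain both bounds by directly combining Theorem~\ref{th-sscwc} with the general delay-bound result Proposition~\ref{lm-dbd}. By Theorem~\ref{th-sscwc}, the node provides to the traversing flow a stochastic service curve $\beta(t) = (C-r^c)\cdot t$ with bounding function either $\bar{G}(x) = \bar{F}^c \otimes \bar{F}^l(x)$ in the general case, or $\bar{G}(x) = 1 - F^c * F^l(x)$ when the two flows are independent. Together with the hypothesis that flow $f$ has v.b.c.\ stochastic arrival curve $\alpha^f(t) = r^f \cdot t$ with bounding function $\bar{F}^f$, Proposition~\ref{lm-dbd} immediately yields
\[
P\{D^{f}(t) > h(\alpha^f + x, \beta)\} \le \bar{F}^f \otimes \bar{G}(x).
\]

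The next step is to evaluate the horizontal distance $h(\alpha^f + x, \beta)$ for the two linear functions $\alpha^f(s) + x = r^f s + x$ and $\beta(s) = (C-r^c) s$. By the definition in (\ref{hd}), for each $s \ge 0$ the inner infimum is the smallest $\tau \ge 0$ satisfying $r^f s + x \le (C-r^c)(s+\tau)$, i.e., $\tau \ge [x - (C - r^c - r^f) s]/(C-r^c)$. Since the assumption $r^f < C - r^c$ makes this bound nonincreasing in $s$, its supremum over $s \ge 0$ is attained at $s = 0$, giving $h(\alpha^f + x, \beta) = x/(C-r^c)$.

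Substituting $\tau = x/(C-r^c)$, equivalently $x = (C-r^c)\tau$, into the delay bound, we obtain $P\{D^f(t) > \tau\} \le \bar{F}^f \otimes \bar{G}((C-r^c)\tau)$. Plugging in the two expressions for $\bar{G}$ from Theorem~\ref{th-sscwc} yields (\ref{2-db2}) in the possibly-dependent case and (\ref{2-db3}) under independence. Finally, because the system is FIFO, $D^{f,i} \le D^f(a^{f,i})$ (as noted in the footnote following the definition of $D^f(t)$), so the virtual-delay bound transfers to the per-packet delay $D^{f,i}$.

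There is no real obstacle here beyond bookkeeping: the nontrivial work (constructing the SSC for the traversing flow and exploiting independence via convolution rather than min-plus convolution) has already been carried out in Theorem~\ref{th-sscwc}, and the computation of the horizontal distance between two lines is elementary once $r^f < C - r^c$ is assumed. The only point worth flagging is to check that the bound is formulated uniformly in $t$ so that it indeed applies to every packet $p^{f,i}$ via $D^{f,i} \le D^{f}(a^{f,i})$.
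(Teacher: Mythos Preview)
Your proposal is correct and follows exactly the paper's approach: the paper states that the corollary is ``directly obtained from Proposition~\ref{lm-dbd}'' once Theorem~\ref{th-sscwc} supplies the stochastic service curve, and you have simply filled in the routine computation of $h(\alpha^f+x,\beta)=x/(C-r^c)$ and the passage from $D^f(t)$ to $D^{f,i}$.
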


It is worth highlighting that in obtaining the bounding function in Theorem \ref{th-sscwc}, we have relied on the right hand side of (\ref{eq-11}), which and $A(t)$ are inherently dependent due to $l^{f, i(t)}$. This explains why in (\ref{2-db3}), the independence information cannot be further made use of.

\subsubsection{An improved delay bound}

In the following, an improved delay bound is presented. %, which is based on the relationship established in Lemma \ref{}. 

\begin{theorem}\label{th-sscwc-db2}
Suppose the traversing flow has a v.b.c stochastic arrival curve $\alpha(t) = r^f \cdot t$ with bounding function $\bar{F}^f$ and the crossing flow has a v.b.c stochastic arrival curve $\alpha(t) = r^c \cdot t$ with bounding function $\bar{F}^c$. If $r^f + r^c < C$, then for any packet $p^{f,i}$ of the traversing flow, its delay is bounded as (a.s.) \\
(i) if the two flows may be dependent,
\begin{equation}\label{2-db4}
P\{D^{f,i} > \tau\} \le \bar{F}^c \otimes \bar{F}^f ((C-r^c)\tau); 
\end{equation}
(ii) if the two flows are independent, 
\begin{equation}\label{2-db5}
P\{D^{f,i} > \tau\} \le 1-{F}^c * {F}^f ((C-r^c)\tau).
\end{equation}
%with $F^{f}(x) \equiv 1- \bar{F}^f(x)$, $F^{c}(x) \equiv 1- \bar{F}^c(x)$. 
\end{theorem}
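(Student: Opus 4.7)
The plan is to mimic the sample-path derivation used in Theorem \ref{th-ssdwc-2a}, now starting from the cross-traffic virtual-time inequality Lemma \ref{lm-sscwc} instead of Lemma \ref{lm-ssdwc}. The guiding idea is to turn Lemma \ref{lm-sscwc} into a pathwise upper bound on $D^{f,i}$ that cleanly splits into one functional of the traversing flow and one functional of the crossing flow, and then to translate that split into a tail bound via a union bound for (i) and a convolution for (ii).

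First, I would write $D^{f,i} = d^{f,i} - a^{f,i}$ and invoke Lemma \ref{lm-sscwc} to split the delay as
\begin{equation*}
D^{f,i} \le \bigl(V^{f,i}(C-r^c) - a^{f,i}\bigr) + \frac{\sup_{0 \le s \le d^{f,i}}\{A^c(s,d^{f,i}) - r^c(d^{f,i}-s)\}}{C-r^c}.
\end{equation*}
Next, expanding $V^{f,i}(C-r^c)$ via (\ref{eq-vtfa}) and repeating verbatim the chain of inequalities used in the proof of Theorem \ref{th-ssdwc-2a} with $C$ replaced by $C-r^c$ (this is exactly where the hypothesis $r^f \le C-r^c$ enters), the first term is bounded by $\frac{1}{C-r^c}\sup_{0\le s\le a^{f,i}}\{A^f(s,a^{f,i}) - r^f(a^{f,i}-s)\}$ up to an arbitrarily small $\epsilon$. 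Combining the two pieces gives the pathwise inequality
\begin{equation*}
(C-r^c)\,D^{f,i} \le X^f + X^c,
\end{equation*}
where $X^f := \sup_{s\le a^{f,i}}\{A^f(s,a^{f,i}) - r^f(a^{f,i}-s)\}$ and $X^c := \sup_{s\le d^{f,i}}\{A^c(s,d^{f,i}) - r^c(d^{f,i}-s)\}$.

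The v.b.c.\ SAC hypotheses on the two flows give $P\{X^f > x\} \le \bar{F}^f(x)$ and $P\{X^c > x\} \le \bar{F}^c(x)$, read sample-pathwise in the same spirit as the closing step of the proof of Theorem \ref{th-ssdwc-2a}. For part (i), $\{D^{f,i} > \tau\} \subseteq \{X^f + X^c > (C-r^c)\tau\}$ and the union-bound inequality $P\{X^f + X^c > x\} \le \inf_{0\le y\le x}(\bar{F}^f(x-y) + \bar{F}^c(y)) = \bar{F}^f \otimes \bar{F}^c(x)$ yields (\ref{2-db4}). For part (ii), independence of the two flows makes $X^f$ and $X^c$ independent functionals of the respective arrival processes, so by a standard coupling/stochastic-dominance argument against independent reference variables with CCDFs exactly $\bar{F}^f$ and $\bar{F}^c$ one obtains $P\{X^f + X^c \le x\} \ge F^f * F^c(x)$, i.e.\ (\ref{2-db5}).

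The step I expect to be the main obstacle is the random-time issue that is already swept under the ``holds for all sample paths'' carpet of Theorem \ref{th-ssdwc-2a}: the supremum defining $X^c$ is evaluated at the random time $d^{f,i}$, which in the shared-node setting depends on arrivals of \emph{both} flows. For part (ii) this is particularly delicate, since I must justify that $X^f$ and $X^c$ are still independent despite the coupling through $d^{f,i}$. The cleanest way out is to dominate $X^c$ pathwise by the corresponding supremum taken over all times $t \ge 0$, which is a functional of $A^c$ alone and hence independent of $X^f$ under the flow-independence hypothesis; the SAC bound on $\bar{F}^c$ is uniform in the time argument, so no tail is lost in the domination.
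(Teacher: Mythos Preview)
Your derivation up through the pathwise inequality $(C-r^c)\, D^{f,i} \le X^f + X^c$ and your handling of part~(i) match the paper's proof. The gap is in your proposed fix for part~(ii). Replacing $X^c = \sup_{s \le d^{f,i}}\{A^c(s,d^{f,i}) - r^c(d^{f,i}-s)\}$ by $\widehat{X}^c := \sup_{t \ge 0}\sup_{s \le t}\{A^c(s,t) - r^c(t-s)\}$ does yield a functional of $A^c$ alone, but the claim that ``no tail is lost in the domination'' is incorrect: the v.b.c.\ SAC hypothesis gives $P\{Y(t) > x\} \le \bar{F}^c(x)$ only for each \emph{fixed} $t$, and uniformity in $t$ of a pointwise tail bound does not transfer to $P\{\sup_t Y(t) > x\}$. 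This is exactly the ``sup-of-a-process versus pointwise bound'' obstruction that produces the integrated tails $\int_x^\infty \bar{F}(y)\,dy$ discussed in Section~\ref{sec-4}; with your domination you would at best recover a bound involving such an integral, not~(\ref{2-db5}).

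The paper avoids this by not dominating $X^c$ at all. Instead it argues that $X^f$ and $X^c$ are already (conditionally on $a^{f,i}$) independent as they stand: it splits the crossing-flow supremum at the epoch $a^{f,i}$, rewriting $X^c$ as a maximum of pieces built from $[0,a^{f,i}]$ and $(a^{f,i}, d^{f,i}]$, and observes that $X^f$ is determined solely by traversing-flow arrivals on $[0,a^{f,i}]$ while $X^c$ is determined solely by crossing-flow arrivals. The randomness of $d^{f,i}$ is absorbed into the $X^c$ side, so flow independence delivers the convolution bound directly with $\bar{F}^c$ intact.
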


\begin{proof}
Consider any sample path. With Lemma \ref{lm-sscwc} particularly (\ref{eq-10}), we obtain, for any packet $p^{f,i}$, %its delay $D_h^{f,i}$ satisfies % $p^{f,i}$
\begin{eqnarray}
&& D^{f,i} = d^{f,i} - a^{f,i} \nonumber\\
&\le& V^{f,i}(C-r^c) - a^{f,i} \nonumber\\
&& + \frac{\sup_{0\le s \le d^{f,i}} \{A^{c}(s, d^{f,i}) - r^c\cdot(d^{f,i}-s) \}}{C-r^c} \nonumber \\
&=&  \max_{1 \le j \le i} \{a^{f,j}+ \frac{\sum_{k=j}^{i}l^{f,k}}{C-r^c} \} - a^{f,i} \nonumber \\
&& + \frac{\sup_{0\le s \le d^{f,i}} \{A^{c}(s, d^{f,i}) - r^c\cdot(d^{f,i}-s) \}}{C-r^c} 
\end{eqnarray}
To ease the presentation, we move $(C-r^c)$ to the left and get
\begin{eqnarray}
&& D^{f,i} \cdot (C-r^c) \nonumber \\
&\le& \max_{1 \le j \le i} \{\sum_{k=j}^{i}l^{f,k} - (C-r^c) (a^{f,i}-a^{f,j}) \}  \nonumber \\
&& + \sup_{0\le s \le d^{f,i}} \{A^{c}(s, d^{f,i}) - r^c\cdot(d^{f,i}-s) \} \nonumber \\
&\le& \max_{0 \le j \le i} \{A^{f}(a^{f,j}-\epsilon, a^{f,i}) - (C-r^c) (a^{f,i} - a^{f,j} + \epsilon)\} \nonumber \\ 
&& + \sup_{0\le s \le d^{f,i}} \{A^{c}(s, d^{f,i}) - r^c\cdot(d^{f,i}-s) \}  + \epsilon \label{a-1} \\
&\le& \sup_{0 \le s \le a^{f,i}} \{A^{f}(s, a^{f,i}) - r^{f} (a^{f,i} - s)\} \nonumber \\
&& + \sup_{0\le s \le d^{f,i}} \{A^{c}(s, d^{f,i}) - r^c\cdot(d^{f,i}-s) \}  + \epsilon  \label{a-2}
%&\le& \sup_{0 \le s \le a^{f,i}} \{A^{f}(s, a^{f,i}) - \rho^{f} (a^{f,i} - s)\} +  \nonumber \\
%&& \max\left\{ \sup_{0\le s \le a^{f,i}} \{A^{c}(s, a^{f,i}) - \rho^c\cdot(a^{f,i}-s) \} \right. \nonumber \\
%&& \qquad + A^{c}(a^{f,i}, d^{f,i}) - \rho^c\cdot(d^{f,i}-a^{f,i}),  \nonumber \\
%&& \left. \sup_{a^{f,i} < s \le d^{f,i}} \{A^{c}(s, d^{f,i}) - \rho^c\cdot(d^{f,i}-s) \} \right\} + \epsilon \label{a-2a}
\end{eqnarray}
where $\epsilon \to 0$. 

Note that, given $a^{f,i}$ as implied by the delay definition, the first two terms on the right hand side of (\ref{a-2}) are independent.
%\footnote{
This independence is more easily seen by expending them as 
\begin{eqnarray}
%\max\left\{ \sup_{0\le s \le a^{f,i}} \{A^{c}(s, a^{f,i}) - \rho^c\cdot(a^{f,i}-s) \} \right. \\
%\qquad + A^{c}(a^{f,i}, d^{f,i}) - \rho^c\cdot(d^{f,i}-a^{f,i}),  \nonumber \\
%\left. \sup_{a^{f,i} < s \le d^{f,i}} \{A^{c}(s, d^{f,i}) - \rho^c\cdot(d^{f,i}-s) \} \right\} + \epsilon 
\sup_{0 \le s \le a^{f,i}} \{A^{f}(s, a^{f,i}) - r^{f} (a^{f,i} - s)\} +  && \nonumber \\
\max\left\{ \sup_{0\le s \le a^{f,i}} \{A^{c}(s, a^{f,i}) - r^c\cdot(a^{f,i}-s) \} \right. &&  \nonumber \\
\qquad + A^{c}(a^{f,i}, d^{f,i}) - r^c\cdot(d^{f,i}-a^{f,i}), &&  \nonumber \\
\left. \sup_{a^{f,i} < s \le d^{f,i}} \{A^{c}(s, d^{f,i}) - r^c\cdot(d^{f,i}-s) \} \right\}  && \label{a-2a}
\end{eqnarray}
where the first term is determined only by the time period $[0, a^{f,i}]$ and the arrivals of the traversing flow in this period, while the second term is determined by the same period $[0, a^{f,i}]$ and another later non-overlapping period $(a^{f,i}, d^{f,i}]$ and the arrivals of the crossing flow in these periods. Since for the same period $[0, a^{f,i}]$, the two arrival processes are independent and for the second period, the first term is not affected, the independence is hence concluded. Consequently, the theorem follows from (\ref{a-2}). 
%}, the theorem follows from (\ref{a-2}). 
\end{proof}

\subsubsection{A further improved delay bound}
In obtaining the improved delay bounds in Theorem \ref{th-sscwc-db2}, we made no assumption on the arrival process of the traversing flow or that of the crossing flow. If, however, these processes satisfy some assumptions, a further improved delay bound can be obtained. 

Specifically, if $A^{f} (t)$ and $A^{c} (t)$ are independent and they have independent stationary increments, a further improved delay bound can be obtained.

\begin{theorem}\label{th-sscwc-db3}
Suppose that the traversing flow $A^{f} (t)$ and the crossing flow $A^{c} (t)$ are independent and they have independent stationary increments. Assume $M^f(1) \equiv {E[e^{\theta A^f(1)}]}$ and $M^c(1) \equiv {E[e^{\theta A^c(1)}]}$ exist  for small $\theta > 0$ and $E[e^{\theta (A^f(1)+A^c(1)- C)}] \le 1$. Then, for any packet $p^{f,i}$ of the traversing flow, its delay is bounded as %(almost surely) as: \\
\begin{equation}\label{2-db6}
P\{D^{f,i} > \tau\} \le e^ {- \theta (C-r^c)\tau}.
\end{equation}
for any $\theta \ge 0$ and any $r^c$ such that $E[e^{\theta (A^c(1)- r^c)}] \le 1$. 
%$M^f(1) M^c(1) \le e^{\theta C}$, for small $\theta > 0$
% and any $\rho^c$ and $\rho^f$ such that $\rho^c > M^c(1)$, $\rho^f > M^f(1)$, and $\rho^f + \rho^c < C$. 
\end{theorem}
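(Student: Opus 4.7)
The plan is to sharpen the argument of Theorem~\ref{th-sscwc-db2} by exploiting the moment-generating-function hypotheses directly, rather than going through tail CCDFs and their min-plus convolution. The starting point is the sample-path inequality~(\ref{a-2}) from the proof of Theorem~\ref{th-sscwc-db2}, which, on taking $r^f = C-r^c$, yields $D^{f,i}(C-r^c) \le X + Y$ with $X = \sup_{0\le s\le a^{f,i}}\{A^f(s,a^{f,i}) - (C-r^c)(a^{f,i}-s)\}$ and $Y = \sup_{0\le s\le d^{f,i}}\{A^c(s,d^{f,i}) - r^c(d^{f,i}-s)\}$. As already observed in Theorem~\ref{th-sscwc-db2}, given $a^{f,i}$, $X$ is determined by $A^f$ on $[0,a^{f,i}]$ and $Y$ by $A^c$ on $[0,d^{f,i}]$, so by flow independence and the independent increments of $A^c$ the two suprema are (conditionally) independent.

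Next, for the $\theta$ in the theorem's hypothesis, I would apply a Chernoff inequality, $P\{D^{f,i}>\tau\} \le e^{-\theta(C-r^c)\tau}\,E[e^{\theta(X+Y)}]$, and factor $E[e^{\theta(X+Y)}] = E[e^{\theta X}]\cdot E[e^{\theta Y}]$ by conditional independence. For $E[e^{\theta Y}]$, time-reversal together with the stationary independent increments of $A^c$ stochastically dominate $Y$ by $\sup_{u\ge 0}[A^c(u)-r^c u]$; the relevant tail $e^{-\theta y}$ follows from $E[e^{\theta(A^c(1)-r^c)}]\le 1$ via Doob's inequality on the exponential supermartingale $e^{\theta(A^c(u)-r^c u)}$. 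For $E[e^{\theta X}]$, the two given assumptions combine through the independence-induced factorization $M^f(1)M^c(1)e^{-\theta C}\le 1$ (joint stability) and $M^c(1)e^{-\theta r^c}\le 1$ (crossing bound), and the analogous exponential-supermartingale argument on the joint process $A^f(u)+A^c(u)-Cu$ controls the tail of the relevant sup.

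The main obstacle is that the suprema $X$ and $Y$ each have exponential tail rate $\theta$, yet their MGFs at this critical $\theta$ can be infinite, so a naive term-by-term Chernoff does not immediately give the clean bound $e^{-\theta(C-r^c)\tau}$. To close the gap, I would split $Y$ at the intermediate time $a^{f,i}$ into the portion on $[0,a^{f,i}]$ and the portion on $(a^{f,i},d^{f,i}]$, and absorb the former into $X$: the merged sup is now a single sup of the joint process $A^f(u)+A^c(u)-Cu$ whose per-step exponential factor is exactly $M^f(1)M^c(1)e^{-\theta C}\le 1$, while the residual sup of $A^c$ at rate $r^c$ on the disjoint later interval has per-step factor $M^c(1)e^{-\theta r^c}\le 1$. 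Because the two exponential-supermartingale computations live on non-overlapping time windows, their contributions can be combined to produce the sought $e^{-\theta(C-r^c)\tau}$, without the $(1+\theta(C-r^c)\tau)$-type correction that the tail convolution used in Theorem~\ref{th-sscwc-db2} would leave behind. Exploiting this disjoint-window decomposition, rather than treating $X$ and $Y$ via their separate stationary envelopes, is the delicate point of the proof.
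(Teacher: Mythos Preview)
Your high-level strategy matches the paper's: merge $A^f$ and $A^c$ on $[0,a^{f,i}]$ at the full rate $C$, and handle the crossing flow alone on $(a^{f,i},d^{f,i}]$ at rate $r^c$, then apply a single exponential-supermartingale/Doob bound. But the execution has a genuine gap.

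You take (\ref{a-2}) as the starting point, so you already carry two \emph{separate} suprema,
\[
X=\sup_{0\le s\le a^{f,i}}\bigl[A^f(s,a^{f,i})-(C-r^c)(a^{f,i}-s)\bigr],\qquad
Y=\sup_{0\le s\le d^{f,i}}\bigl[A^c(s,d^{f,i})-r^c(d^{f,i}-s)\bigr].
\]
Splitting $Y$ at $a^{f,i}$ still leaves you with $X+\sup_s[A^c(s,a^{f,i})-r^c(a^{f,i}-s)]$ on the early window, and the elementary inequality goes the wrong way:
\[
\sup_s f(s)+\sup_s g(s)\ \ge\ \sup_s\bigl[f(s)+g(s)\bigr],
\]
so you cannot ``absorb'' the two into the single sup $\sup_s[A^f(s,a^{f,i})+A^c(s,a^{f,i})-C(a^{f,i}-s)]$ as an upper bound. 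Moreover, controlling $X$ on its own would require $M^f(1)e^{-\theta(C-r^c)}\le 1$, which is \emph{not} implied by the hypotheses $M^f(1)M^c(1)e^{-\theta C}\le 1$ and $M^c(1)e^{-\theta r^c}\le 1$; so the separate-sup route is blocked both algebraically and at the level of MGF assumptions.

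The paper avoids this by stepping back to (\ref{eq-6a}), where the busy-period start $t^0$ is a \emph{specific} time, not yet relaxed to a supremum. It first splits $A^c(t^0,d^{f,i})$ at $a^{f,i}$, regroups to get $A^f(t^0,a^{f,i})+A^c(t^0,a^{f,i})-C(a^{f,i}-t^0)$ plus the bare term $Z=A^c(a^{f,i},d^{f,i})-r^c(d^{f,i}-a^{f,i})$, and only \emph{then} relaxes $t^0$ to a single sup over $s\in[0,a^{f,i}]$. The later-window contribution is therefore not a residual sup but the single increment $Z$ with $E[e^{\theta Z}]\le 1$. One application of Doob to the supermartingale $s\mapsto e^{\theta(A^f(s)+A^c(s)-Cs)}e^{\theta Z}$ then yields $e^{-\theta(C-r^c)\tau}$ directly, with no convolution penalty. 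To repair your argument, start from (\ref{eq-6a}) rather than (\ref{a-2}).
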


\begin{proof}
Our starting point is (\ref{eq-6a}), which is reproduced here:
\begin{equation}\label{eq-6a-rep}
d^{f,i} \le t^0 + \frac{\sum_{k=i_0}^{i} l^{f,k}}{C-r^c} + \frac{A^{c}(t^0, d^{f,i}) - r^c (d^{f,i}-t^0)}{C-r^c} 
\end{equation}
with which, the following is easily verified
\begin{eqnarray}
&& (C-r^c) \cdot (d^{f,i}-a^{f,i}) \nonumber \\
&\le& A^f(t^0, a^{f,i}) - (C-r^c) \cdot (a^{f,i}-t^0) \nonumber \\
&& + A^{c}(t^0, d^{f,i}) - r^c (d^{f,i}-t^0) \label{fidb-a1} \\
%&=&  A^f(t^0, a^{f,i}) - (C-\rho^c) \cdot (a^{f,i}-t^0) \nonumber \\
%&& + A^{c}(t^0, a^{f,i}) - \rho^c (a^{f,i}-t^0) + A^{c}(a^{f,i}, d^{f,i}) - \rho^c (d^{f,i}-a^{f,i}) \nonumber \\
&=&  A^f(t^0, a^{f,i}) + A^{c}(t^0, a^{f,i}) - C \cdot (a^{f,i}-t^0) \nonumber \\
&& + A^{c}(a^{f,i}, d^{f,i}) - r^c (d^{f,i}-a^{f,i}) \nonumber \\
&\le & \sup_{0 \le s \le a^{f,i}} \{ A^f(s, a^{f,i}) + A^{c}(s, a^{f,i}) - C \cdot (a^{f,i}-s) \} \nonumber \\
&& + A^{c}(a^{f,i}, d^{f,i}) - r^c (d^{f,i}-a^{f,i}) \label{fidb-a2} \\
&= & \sup_{0 \le s \le a^{f,i}} \left \{ A^f(s, a^{f,i}) + A^{c}(s, a^{f,i}) - C \cdot (a^{f,i}-s) \right. \nonumber \\
&& \left. + A^{c}(a^{f,i}, d^{f,i}) - r^c (d^{f,i}-a^{f,i}) \right \} \label{fidb-a3}
\end{eqnarray}
where in step (\ref{fidb-a1}) we have used the fact that $\sum_{k=i_0}^{i} l^{f,k} \le A^f(t^0, a^{f,i})$. 

It is worth highlighting that, the two terms in (\ref{fidb-a2}) are independent, since the second term is determined by a period that is non-overlapping with the period involved in the first term, and the process $A^c(t)$ has independent increments. Also due to this, in step (\ref{fidb-a3}), we have intentionally moved the second term inside $\sup \{ \dot \}$. 

For ease of exposition, we let 
$$
Z = A^{c}(a^{f,i}, d^{f,i}) - r^c (d^{f,i}-a^{f,i})
$$
for which, it is easily verified that, $E[e^{\theta Z}|d^{f,i}] = (E[e^{\theta (A^c(1) - r^c)}])^{d^{f,i}-a^{f,i}}\le 1$ for $\forall d^{f,i}$ and hence $E[e^{\theta Z}] \le 1$, under the given assumptions. % for any $\theta > 0$.

Then, for any $\theta \ge 0$, there holds, 
\begin{eqnarray}
&& P \{ (C-r^c) D^{f,i}  > x \} \nonumber \\
&=& P \{ e^{\theta (C-r^c) ( d^{f,i} - a^{f,i}) } > e^{\theta x} \}  \nonumber \\
&\le& P \{ e^{ \sup_{0 \le s \le a^{f,i}} \{ \theta[ A^f(s, a^{f,i}) + A^{c}(s, a^{f,i}) - C \cdot (a^{f,i}-s)] \} } \nonumber \\
&& \cdot e^{ \theta[ A^{c}(a^{f,i}, d^{f,i}) - r^c (d^{f,i}-a^{f,i}) ]} > e^{\theta x} \} \nonumber \\
&=& P \{\sup_{0 \le s \le a^{f,i}} e^{ \theta(A^{f}(s) + A^{c}(s) - C \cdot s)} \cdot e^{\theta Z } > e^{\theta x} \} \label{fidb-a4} \\
&\le& \frac{E[e^{ \theta(A^{f}(1) + A^{c}(1) - C)} e^{\theta Z }]}{e^{\theta x}} \label{fidb-a5} \\
&=& E[e^{ \theta(A^{f}(1) + A^{c}(1) - C)}] \cdot E[e^{\theta Z }] \cdot e^{-\theta x} \label{fidb-a5} \\
&\le & e^{-\theta x} \label{fidb-a6}
\end{eqnarray}
where step (\ref{fidb-a4}) is due to that both $A^f(t)$ and $A^c(t)$ are stationary processes, step (\ref{fidb-a5}) is from the Doob's maximal inequalities for sub-(super-)martingales, and step (\ref{fidb-a6}) is from the assumptions of the theorem.

Specifically, define $X(s) = e^{\theta(A^f(s) + A^c(s) - C \cdot s)} e^{\theta Z}$, $s = 0, 1, 2, \dots, a^{f,i}$. There holds, due to independent increments assumption,  
\begin{eqnarray}
&& E[X(s+1) | X(1), \dots, X(s)] \nonumber \\
&=& E[e^{\theta (A^f(s, s+1) + A^c(s,s+1) - C)}] X(s)\nonumber \\
&=& E[e^{\theta (A^f(1) + A^c(1) - C)}] X(s)\nonumber \\
&\le& X(s)
\end{eqnarray}
and hence $\{ X(s) \}$ forms a supermartingale. Then (\ref{fidb-a5}) is obtained from the Doob's maximal inequality for supermartingales, which has also been used in the snetcal literature \cite{Ciucu07b}\cite{Jiang-valuetools09}.
\end{proof}

\section{Examples} \label{sec-7}
To demonstrate the obtained results, examples are presented in this section. The focus is on the obtained delay bounds. Without loss of generality and for ease of expression, we normalize the capacity and take $C=1$. 

\subsection{Single Flow}

For the single flow case, consider the arrival process $A^f(t)$ governed by a compound Poisson process. In this process, packets arrive according to a Poisson process with intensity $\lambda$. Packet lengths are independent and identically distributed, following a negative exponential distribution with mean $\frac{1}{\mu}$. Specifically:
$$
A^f(t) = \sum_{n=1}^{N(t)} l^{f,i}
$$
where $N(t)$ is a Poisson process with arrival intensity $\lambda$, which is independent of the packet lengths, and $l^{f,1}, l^{f,2}, \dots$ are i.i.d. random variables with mean $\frac{1}{\mu}$.

For this compound Poisson process, it can be verified that it has a v.b.c. stochastic arrival curve \cite{Jiang-comnet09} \cite{Jiang-note10} $\alpha^f(t) = \frac{\lambda }{\mu - \theta} t$ with bounding function $\bar{F}^f(x) = e^{-\theta x}$ for $\forall \theta >0$ and $r^f \equiv \frac{\lambda }{\mu - \theta} \le 1$. Note that $r^f$ here is a function of $\theta$.

With Theorem \ref{th-ssdwc-2a}, under the condition that $r^f \le 1$, the tightest delay bound is obtained by taking $\theta=\mu - \lambda$, which is: 
\begin{equation}\label{ssdwc-mm1}
P\{D > \tau\} \le e^{-(\mu-\lambda) \tau}.
\end{equation}

It is worth highlighting that this single flow system may be considered\footnote{Note that in real computer networks the time is discrete. For this reason, we have also assumed discrete time at the beginning. Nevertheless, this paper does not specify the length of the time unit. Letting the unit time length $\to$ infinitely small, the system approaches time-continuous and all results in this paper still hold.} as an $M/M/1$ system with Poisson arrival rate $\lambda$ and exponential service time distribution with parameter $\mu$. 

Appealingly, the delay bound (\ref{ssdwc-mm1}) matches exactly with the delay\footnote{It is the delay in the system which matches the definition of $D^{f,i}$, while not the delay in queue.} distribution found from $M/M/1$ analysis.

\subsection{With Cross Traffic}
For the case with cross traffic, we suppose that priority scheduling is adopted, with the crossing flow at the high priority level.  

We assume the traversing flow and the crossing flow are independent of each other. For both, the arrival process is governed by a compound Poisson process. Similar to the single flow case, we consider that in each traffic arrival process, packets arrive according to a Poisson process with intensity $\lambda^f$ for the traversing flow (respectively $\lambda^c$ for the crossing flow). In addition, to ease later comparison, we assume all packets (of both flows) have the same i.i.d. length, following a negative exponential distribution with mean $1/\mu$. % for the traversing flow (rspt. $\mu^c$ for the crossing flow). 

This system is equivalent to an $M/M/1/$priority system, for which, the classic queueing theory has exact result for the delay expectation of the low priority traffic. 

Note that, given the delay CCDF $P \{ D \ge \tau \}$, the average delay is obtained as \cite{Kleinrock75}
$$
E[D] = \int_{0}^{\infty} P \{ D \ge \tau \} d \tau.
$$
The above relationship between the delay expectation and the CCDF readily allows us to find upper bounds on delay expectation from the obtained delay bounds. Among the various delay bounds derived in the previous sections, (\ref{db-exp1}) and (\ref{db-exp2}) are the tightest and will be compared against the exact solution. 

\subsubsection{Delay expectation}
For the $M/M/1/$priority system, the classic queueing theory gives the following result:
\begin{equation}\label{ave-d}
E[D] = \frac{\rho}{\mu (1-\rho^c) (1- \rho)} +\frac{1}{\mu} = \frac{1}{\mu (1- \rho)}[1+\frac{\rho^c \rho}{1-\rho^c}]  % = \frac{1 - \rho^c + \rho^c \rho}{\mu (1-\rho^c) (1- \rho)}. 
\end{equation}
where $E[D]$ denotes the delay expectation, $\rho^f \equiv \frac{\lambda^f}{\mu}$, $\rho^c \equiv \frac{\lambda^c}{\mu}$, and $\rho \equiv \rho^c + \rho^f$. 

\subsubsection{Bound on delay expectation, based on (\ref{2-db5})}

Again, for the two compound Poisson arrival processes, the traversing flow has a v.b.c. stochastic arrival curve $\alpha^f(t) = \frac{\lambda^f }{\mu - \theta^f} t$ with bounding function $\bar{F}^f(x) = e^{-\theta^f x}$ for $\forall \theta^f >0$; the traversing flow has a v.b.c. stochastic arrival curve $\alpha^c(t) = \frac{\lambda^c }{\mu - \theta^c} t$ with bounding function $\bar{F}(x) = e^{-\theta^c x}$ for $\forall \theta^c >0$. 

For ease of expression, letting $\theta^f=\theta^c \equiv \theta$, which may give a sub-tight bound, we obtain from Theorem~\ref{th-sscwc-db2} %, the delay bound (\ref{2-db5}) becomes, 
\begin{equation}\label{ssdwc-2aa}
P\{D > \tau \} \le (1 + \theta \cdot (1-r^c)y) e^{-\theta \cdot (1-r^c)\tau} 
\end{equation}
where $r^c = \frac{\lambda^c }{\mu - \theta}$ and $r^f = \frac{\lambda^f }{\mu - \theta}$, for any $\theta>0$, satisfying
%where $r^c(\theta) = \frac{\lambda^c }{\mu - \theta}$ and $r^f(\theta) = \frac{\lambda^f }{\mu - \theta}$, for any $\theta>0$, satisfying
$$
r^f + r^c \le 1 
%r^f(\theta) + r^c(\theta) \le 1 
%\frac{\lambda^f}{\mu - \theta} + \frac{\lambda^c}{\mu - \theta} \le 1 
$$
which further gives, by letting $\theta= \mu - \lambda^f -\lambda^c$
$$
P\{D > y\} \le [1 + \frac{\lambda^f (1 - \rho)}{\rho} y ]e^{- \frac{\lambda^f (1-\rho)}{\rho}y}
$$
and consequently, a bound on delay expectation is as:
\begin{equation}\label{db-exp1}
E[D] \le \frac{2}{\mu (1- \rho)}[1+\frac{\rho^c}{\rho^f}]. %\frac{2 \rho}{\lambda^f (1-\rho)} = 
\end{equation}

\subsubsection{Bound on delay expectation, based on (\ref{2-db6})}

For the considered system, letting  $\theta= \mu - \lambda^f -\lambda_c$ and  $r^c = \frac{\lambda^c}{\lambda^f + \lambda^c}$, the following can be verified: (1) $E[e^{\theta (A^c(1)- r^c)}] = e^{\theta (\frac{\lambda^c}{\mu-\theta} - r^c)} =1$ and (2) $E[e^{\theta (A^f(1)+A^c(1)- C)}] = e^{\theta (\frac{\lambda^f}{\mu-\theta} +\frac{\lambda^c}{\mu-\theta} -1)} =1$. Then, from Theorem \ref{th-sscwc-db3}, the delay bound (\ref{2-db6}) becomes
\begin{equation}\label{ssdwc-2aa}
P\{D > \tau \} \le e^{-(\mu - \lambda^f -\lambda_c)(1-r^c)\tau}
\end{equation}
with which, the following bound on delay expectation is obtained:
\begin{equation}\label{db-exp2}
E[D] \le \frac{1}{\mu(1 - r^c)(1-\rho)} = \frac{1}{\mu (1- \rho)}[1+\frac{\rho^c}{\rho^f}]
%E[D] \le \frac{1}{\mu(1 - r^c)(1-\rho)} = \frac{\rho}{\lambda^f (1-\rho)} = \frac{1}{\mu (1- \rho)}[1+\frac{\rho^c}{\rho^f}].
\end{equation}
which is clearly better than (\ref{db-exp1}).

To give an overview of the bound (\ref{db-exp2}), Fig. \ref{fig-delay} is presented, where $x$-axis is the total load, $y$-axis is the share of cross traffic in the total load. In the figure, the bound is compared against the exact result (\ref{ave-d}), under different total loads ($\rho \in [0, 0.9]$), and different shares ($\frac{\rho^c}{\rho} \in [0, 0.9]$). The comparison shows that the bound (\ref{db-exp2}) is reasonably good. % for all the combinations of traffic load and traffic share. 

\begin{figure}[htb] % float placement: (h)ere, page (t)op, page (b)ottom, other (p)age
  \centering
  % file name: C:/Documents and Settings/Jiang/Desktop/delay-scaling-insight/fig-net.pdf
  \includegraphics[width=0.5\textwidth]{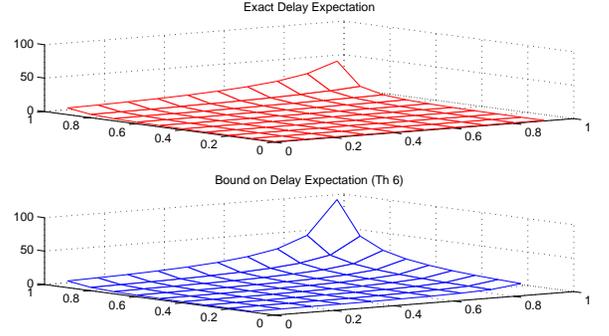}
  \caption{Comparison of bound (\ref{db-exp2}) ($\mu=1$)}
  \label{fig-delay}
\end{figure}

\section{Discussion and Related Work} \label{sec-8}

%\subsection{Ignoring the Packetizer in Delay Bound Analysis}

In deterministic network calculus, the delay bound derived from the Guaranteed Rate server model is better than that directly from the deterministic counterpart of (\ref{sdb}). To overcome this difference, an interesting property has been proved, which says, in deterministic delay bound analysis, {\em the last packetizer can be ignored} \cite{Chang00}\cite{NetCal}. For the considered single node case, this property implies that, for the concern of deterministic delay bound analysis, the constant capacity node could be treated as if it had a deterministic service curve $C\cdot t$ and hence Proposition \ref{lm-los-db} could be used directly. Results in this paper further imply that this property can also be extended to the stochastic network calculus context. Particularly, it is easily verified that, for the single flow case, delay bound (\ref{ssdwc-2a}) in Theorem \ref{th-ssdwc-2a} is better than delay bound (\ref{ssdwc-1a}) in Corollary  \ref{th-abssc} by ignoring the packetization effect $\bar{F}_l$. In addition, for the case with cross traffic, Corollary \ref{th-sscwc-db0} and Corollary \ref{th-sscwc-db1} will lead to Theorem \ref{th-sscwc-db3} by ignoring the packetization effect.

%\subsection{Related Work}

In the general sense of taking packetization effect into stochastic service curve and delay bound analysis, the work \cite{Burchard11} is most related. However, the obtained results in \cite{Burchard11} are mostly functions of $\int_{x}^{\infty}\bar{F^{l}}(y)dy$, while in our results, they are related directly to $\bar{F^{l}}$. In addition, how to make use of independence information to improve the obtained results is not investigated in \cite{Burchard11}. Moreover, \cite{Burchard11} focuses on a specific type of traffic, while our investigation is more systematic (for the single node case), applicable to any type of traffic that has v.b.c stochastic arrival curve, which covers a wide range of traffic types \cite{Jiang-comnet09}. 

For the examples, delay bound analysis of $M/M/1$ using snetcal can be found in \cite{Ciucu07b}\cite{Jiang-valuetools09}. However, the technique used in this paper has fundamental difference from the techniques used in \cite{Ciucu07b}\cite{Jiang-valuetools09}. Particularly in \cite{Ciucu07b}\cite{Jiang-valuetools09}, the analysis directly works on the arrival process and the service process, without mapping the arrival process to the stochastic arrival curve characterization, nor proving the stochastic service curve characterization of the system taking into consideration the packetization effect. For delay bound analysis of $M/M/1/$priority using snetcal, the same delay expectation bound as (\ref{db-exp2}) may be found in \cite{Ciucu07-thesis}. However, beside the fundamental difference in the used analytical technique, the bound in \cite{Ciucu07-thesis} is derived under some additional conditions/assumptions, e.g., preemptive priority and ignoring the packetizer. Nevertheless, it is exciting to see the same bound derived when the packetization effect is taken into account.

\section{Conclusion} \label{sec-9}

In this paper, we considered a packet-switched network node with constant capacity (in bps) and systematically derived stochastic service curves and delay bounds for the system. Specifically, we proved that the node provides a stochastic service curve with a bounding function equal to the CCDF of packet length distribution. In addition, we derived delay bounds, which imply that {\em the last packetizer can be ignored} property may be extended to SNC. Furthermore, we presented relations that allow to exploit independence information in the analysis. For the single flow case, a by-product is a new delay bound that matches with the exact result for $M/M/1$. 

Recall that, while the considered system is perhaps the simplest computer network system, before this work, in the context of stochastic network calculus, little was known about how to make use of the independence information in the analysis, particularly when the packetization effect is considered. This paper makes one step forward. We believe the analysis may be extended to the network case, where how to make use of flow independence information to improve results (without ignoring the packetization effect) still remains largely mysterious.

\bibliographystyle{abbrv}%{alpha}%{unsrt}%{alpha}
\bibliography{nc-qt}

\section*{Appendix: Proof of Lemma \ref{lm-relation}}

Consider any time $t$ and any sample path of the system. %That $p^{f,i(t)}$ is the most recent packet of flow $f$ in $A^{*}(t)$, which departs from the node after $t$ is equivalent to 
Since $i(t)= \min \{k: d^{f,k} \ge t\}$, meaning $p^{f,i(t)}$ is the most recent packet of flow $f$ in $A^{*}(t)$ which departs from the node after $t$, there holds: 
\begin{equation}\label{mid-0a}
A^{*}(t) \ge \sum_{k=1}^{i(t)-1} l^{f,k}
\end{equation}
where the equality holds when $d^{f,i(t)} > t$; otherwise when $d^{f,i(t)} = t$, $A^{*}(t) > \sum_{k=1}^{i(t)-1} l^{f,k}$ since $A^{*}(t) = \sum_{k=1}^{i(t)} l^{f,k}$. 

Define 
$$
i'=\max\{k: a^{f,k} \le t \}
$$ 
which means that $p^{f, i'}$ is the last arrival packet in $A(t)$ which arrives at or before time $t$. In other words, we have
$$
a^{f,i'+1} >t.
$$
In addition, comparing $i'$ with $i(t)$, we must have
$$
i' \ge i(t) -1 
$$
because $i(t)-1$ is the last departure packet before or on time $t$, which has to have arrived before or at time $t$. 

Let us split the time period $[0,t]$ into $i'+1$ intervals, which are $[0, a^{f,1})$, \dots, $[a^{f,j-1}, a^{f,j})$, \dots, $[a^{f, i'-1}, a^{f, i'})$ and $[a^{f, i'}, t]$. Then, consider
\begin{eqnarray}
&& A\otimes\beta_h(t) - A^{*}(t) \nonumber \\
&=& \inf_{0 \le s \le t} \{A(s) + R \cdot (t-s) - A^{*}(t)\} \nonumber\\
&=& R \cdot \inf_{0 \le s \le t} \{(t-s) + \frac{A(s) - A^{*}(t)}{R}\} \nonumber \\
&=& R \cdot \min \left [ \inf_{0 \le s <a^{f,1}} \{(t-s) + \frac{A(s) - A^{*}(t)}{R}\}, \dots, \right. \nonumber\\
&& \inf_{a^{f,j-1} \le s <a^{f,j}} \{(t-s) + \frac{A(s) - A^{*}(t)}{R}\}, \dots, \nonumber\\
&& \left. \inf_{a^{f,i'} \le s \le t} \{(t-s) + \frac{A(s) - A^{*}(t)}{R}\} \right ] \label{mid-00a}
\end{eqnarray}

For the first interval, we have $A(s) = 0$ for $\forall s: 0 \le s < a^{f,1}$. Hence 
\begin{eqnarray}
&& \inf_{0 \le s < a^{f,1}} \{(t-s) + \frac{A(s) - A^{*}(t)}{R}\} \nonumber \\
&=& t- \frac{A^{*}(t)}{R} + \inf_{0 \le s < a^{f,1}} \{-s + \frac{A(s)}{R}\} \nonumber \\
&=& t- \frac{A^{*}(t)}{R} - a^{f,1} \nonumber \\
&\le& d^{f,i(t)} - \left[ a^{f,1} + \frac{A^{*}(t)}{R} \right ] 
%&\le& d^{f,i(t)} - a^{f,1} + \frac{- \sum_{k=1}^{i(t)-1} l^{f,k}}{R} \\
%&=& d^{f,i(t)} - \left[ a^{f,1} + \frac{\sum_{k=1}^{i(t)} l^{f,k}}{R} \right ] + \frac{l(t)}{R} 
\end{eqnarray}

Similarly, for the next $i'-1$ intervals, we have $A(s)=\sum_{k=1}^{j-1}l^{f,k}$, for $\forall s: a^{f,j-1} \le s < a^{f,j}$, $(j=2, \dots, i')$. Hence, 
\begin{eqnarray}
&& \inf_{a^{f,j-1} \le s <a^{f,j}} \{(t-s) + \frac{A(s) - A^{*}(t)}{R}\} \nonumber \\
&=& t- \frac{A^{*}(t)}{R} + \inf_{a^{f,j-1} \le s <a^{f,j}} \{-s + \frac{A(s)}{R}\} \nonumber \\
&=& t- \frac{A^{*}(t)}{R} - a^{f,j} + \frac{\sum_{k=1}^{j-1}l^{f,k}}{R} \\
&\le& d^{f,i(t)} - \left[ a^{f,j} + \frac{A^{*}(t) - \sum_{k=1}^{j-1}l^{f,k}}{R} \right ]
%&\le& d^{f,i(t)} - a^{f,j} + \frac{\sum_{k=1}^{j-1}l^{f,k} -\sum_{k=1}^{i(t)-1} l^{f,k}}{R} \\
%&=& d^{f,i(t)} - \left[ a^{f,j} + \frac{\sum_{k=j}^{i(t)} l^{f,k}}{R} \right ] + \frac{l(t)}{R}
\end{eqnarray}

Note that for $\forall s: a^{f,i'} \le s < a^{f,i'+1}$, we have $A(s)=\sum_{k=1}^{i'}l^{f,k}$. In addition, in the above discussion, it is known $t < a^{f,i'+1}$. Hence, for the last interval, we have $A(s)=\sum_{k=1}^{i'}l^{f,k}$ for $\forall s: a^{f,i'} \le s \le t$. Consequently, 
\begin{eqnarray}
&&\inf_{a^{f,i'} \le s \le t} \{(t-s) + \frac{A(s) - A^{*}(t)}{R}\} \nonumber \\
&=& t- \frac{A^{*}(t)}{R} + \inf_{a^{f,i'} \le s \le t} \{-s + \frac{A(s)}{R}\} \nonumber \\
&=& t- \frac{A^{*}(t)}{R} - t + \frac{\sum_{k=1}^{i'}l^{f,k}}{R} \nonumber \\
&=& - \frac{A^{*}(t)}{R} + \frac{\sum_{k=1}^{i'}l^{f,k}}{R} \nonumber \\
&\le& d^{f,i(t)} -  \left[ a^{f,i(t)} + \frac{A^{*}(t)-\sum_{k=1}^{i'}l^{f,k}}{R} \right]
%&\le& d^{f,i(t)} -  a^{f,i(t)} + \frac{\sum_{k=1}^{i'}l^{f,k}- \sum_{k=1}^{i(t)-1}}{R} \\
%&=& d^{f,i(t)} -  \left[ a^{f,i(t)} + \frac{\sum_{k=i'+1}^{i(t)}}{R} \right] + \frac{l(t)}{R}
\end{eqnarray}

Considering all these $i'+1$ intervals, we get
{\small
\begin{eqnarray}
&& A\otimes\beta_h(t) - A^{*}(t) \nonumber \\
&\le& R \cdot d^{f,i(t)} - \nonumber \\
&& R \cdot \max [ a^{f,1} +  \frac{A^{*}(t)}{R}, \dots, a^{f,j} +  \frac{A^{*}(t) - \sum_{k=1}^{j-1}l^{f,k}}{R}, \dots,  \nonumber \\
&& a^{f,i'} +  \frac{A^{*}(t) - \sum_{k=1}^{i'}l^{f,k}}{R}, a^{f,i(t)} + \frac{\sum_{k=i'+1}^{i(t)}l^{f,k}}{R} ] \\
&\le & l^{f, i(t)} + R \cdot d^{f,i(t)} - \nonumber \\
&& R \cdot \max [ a^{f,1} + \frac{\sum_{k=1}^{i(t)} l^{f,k}}{R}, \dots, a^{f,j} + \frac{\sum_{k=j}^{i(t)} l^{f,k}}{R}, \dots, \nonumber \\
&& a^{f,i'} + \frac{\sum_{k=i'}^{i(t)} l^{f,k}}{R}, a^{f,i(t)} + \frac{\sum_{k=i'+1}^{i(t)}l^{f,k}}{R} ] \label{mid-1a}\\
&\le& l^{f, i(t)} + R \cdot d^{f,i(t)} - \nonumber \\
&& R \cdot \max \left [ a^{f,1} + \frac{\sum_{k=1}^{i(t)} l^{f,k}}{R}, \dots, a^{f,i(t)} + \frac{\sum_{k=i(t)}^{i(t)}l^{f,k}}{R} \right] \label{mid-1b} \\
&=& R \cdot [d^{f,i(t)} - V^{f, i(t)}(R)] + l^{f, i(t)}.
\end{eqnarray}
}

Here step (\ref{mid-1b}) is due to $i(t) \le i'+1$ and taking maximum on the first $i(t)$ elements of the third term in (\ref{mid-1a}) results in a smaller or equal value.

% Set the ending of a LaTeX document
\end{document}